\newcommand\ming[1]{{\color{blue}{#1 -- Ming}}}
\newcommand\numberthis{\addtocounter{equation}{1}\tag{\theequation}}
\newcounter{ass-counter}
\newcounter{thm-counter}
\newcounter{remark-counter}
\newtheorem{theorem}[thm-counter]{Theorem}%[section]
\newtheorem{lemma}[thm-counter]{Lemma}%[Lemma]
\newtheorem{corollary}[thm-counter]{Corollary}
\newtheorem{assumption}[ass-counter]{Assumption}
\Crefname{assumption}{Assumption}{Assumptions}
\def\v{\bm{v}}
\def\x{\bm{x}}
\def\u{\bm{u}}
\def\e{\bm{e}}
\def\g{\bm{g}}
\def\y{\bm{y}}
\def\m{\bm{m}}
\def\E{\mathbb{E}}
\def\bdelta{\bm{\delta}}
\def\bi{{(i)}}
\def\og{\overline{\bm{g}}}
\def\odelta{\overline{\bm{\delta}}}
\def\ob{\overline{\bm{b}}}
\def\oe{\overline{\bm{e}}}
\def\A{\mathscr{A}}
\def\single{{Single Compensation}}
\def\CR{{ErrorCompensatedX}}
\author[1]{{Hanlin Tang}\thanks{tanghl1994@gmail.com}~}
\author[3]{Yao Li}
\author[2]{Ji Liu}
\author[3]{Ming Yan}
\affil[1]{University of Rochester}
\affil[2]{Kuaishou Technology}
\affil[3]{Michigan State University}
\title{ErrorCompensatedX: error compensation for variance reduced algorithms}
\date{\today}
\begin{document}

\pagenumbering{arabic}

\maketitle

\begin{abstract}
Communication cost is one major bottleneck for the scalability for distributed learning. One approach to reduce the communication cost is to compress the gradient during communication. However, directly compressing the gradient decelerates the convergence speed, and the resulting algorithm may diverge for biased compression. Recent work addressed this problem for stochastic gradient descent by adding back the compression error from the previous step. This idea was further extended to one class of variance reduced algorithms, where the variance of the stochastic gradient is reduced by taking a moving average over all history gradients. However, our analysis shows that just adding the previous step's compression error, as done in existing work, does not fully compensate the compression error. So, we propose \CR, which uses the compression error from the previous two steps. We show that {\CR} can achieve the same asymptotic convergence rate with the training without compression. Moreover, we  provide a unified theoretical analysis framework for this class of variance reduced algorithms, with or without error compensation. 
\end{abstract}

\section{Introduction}
Data compression reduces the communication volume and alleviates the communication overhead in distributed learning. 
E.g., \citet{qsgd} compress the gradient being communicated using quantization and find that reducing half of the communication size does not degrade the convergence speed. However, the convergence speed would be slower if we further reduce the communication size, and it requires the compression to be unbiased~\citep{qsgd,tang2019doublesqueeze}. Alternatively, recent work \citep{martinmemory} shows that compression with error compensation, which adds back the compression error to the next round of compression, using only $3\%$ of the original communication volume does not degrade the convergence speed, and it works for both biased and unbiased compression operators.  

Despite the promising performance of error compensation on stochastic gradient descent (SGD), SGD admits a slow convergence speed if the stochastic gradient has a large variance. Variance reduction techniques, such as \textbf{Momentum SGD}~\citep{momentum_zhang2015deep}, \textbf{ROOT-SGD}~\citep{rootsgd}, \textbf{STORM}~\citep{nigt}, and \textbf{IGT}~\citep{igt}, are developed, and they admit increased convergence speeds either theoretically or empirically. We found that directly applying error compensation to those variance reduced algorithms is not \textit{optimal}, and their convergence speeds degrade. So a natural question arises: \textbf{what is the best compression method for variance reduced algorithms?} In this paper, we answer this question and propose \CR, a general method for error compensation, and show that it admits faster convergence speeds than previous error compensation methods.
%Therefore many alternative optimization algorithms was proposed with variance reduction technique. In this paper, we focus on one class of variance reduced optimization algorithms: \textbf{Momentum SGD, ROOT-SGD, STORM} and \textbf{IGT}. Previous work has already proved that this class of algorithms admits a faster convergence rate either theoretically or empirically. So a natural question arises is: \textbf{For this class of variance reduced algorithms, what is the best way of communication compression?}  Our study shows that previous error compensation method, which directly adding back the compression error from the last step, is not \textit{optimal} for this class of variance reduced algorithms. This means the convergence speed with compression still degrades even with error compensation. Motivated from this observation, we propose \CR, which is a new and more general way of doing error compensation. We prove that \CR admits a faster convergence rate than previous error compensation methods, for all of those variance reduced algorithms.
%\paragraph{Our Contribution:}
The contributions of this paper can be summarized as follows:
\begin{itemize}
\item We propose a novel error compensation algorithm for variance reduced algorithms and SGD. Our algorithm admits a faster convergence rate compared to previous methods \citep{ecmomentum,stich2018sparsified} by  fully compensating all error history.
\item We provide a general theoretical analysis framework to analyze error compensated algorithms. More specifically, we 
%prove that for any algorithm that admits the updating rule in \eqref{update:general}, 
decompose the convergence rate into the sum of two terms
\begin{align*}
\frac{1}{T}\sum_{t=1}^T\|\nabla f(\x_t)\|^2\leq & \mathcal{R}_{\text{uncompressed}} + \mathcal{R}_\epsilon,
\end{align*}
where $\mathcal{R}_{\text{uncompressed}}$ depends only on the convergence rate of the original algorithm without compression and $\mathcal{R}_\epsilon$ is depends only on the magnitude of the compression error $\epsilon$. It means that we can easily attain the convergence rate for any compressed algorithm in the form of~\eqref{update:general}. To the best of our knowledge, this is the first general result for error compensation.
\end{itemize}

\section{Related Works}
\paragraph{Variance Reduced Optimization.}
When the variance of the stochastic gradient is large, training using SGD becomes unstable, and it usually renders a slow convergence speed. Many studies try to reduce the variance of the stochastic gradient. For example,
{Momentum SGD} \citep{momentum_zhang2015deep} takes a moving average over all previous stochastic gradients. However, its expected momentum does not equal to the full gradient. Recent studies take one step further by applying bias correction to the momentum. One approach is to compute the gradient at $\x_t$ and $\x_{t-1}$ using the same data sample $\xi_t$ \citep{storm_pmlr-v119-huang20j,storm,rootsgd}, while another direction is to compute the gradient at the extrapolated point of $\x_t$ and $\x_{t-1}$ \citep{igt,nigt}. When the number of data samples is finite, some approaches compute the full gradient once in a while or memorize the latest stochastic gradient of each data sample and use its average to approximate the full gradient; examples include SVRG~\citep{svrg}, SAGA~\citep{saga}, and SARAH~\citep{sarah}.

\paragraph{Communication Efficient Optimization.}
One major bottleneck of distributed learning is the communication cost. Methods, such as data compression, decentralized training, local SGD, and federated learning, were proposed to reduce the communication cost. Data compression is an important approach that can be combined with other approaches, and it is critical for communication networks with limited bandwidth.
It was firstly proposed in \citet{seide2014-bit}, where authors proposed \textbf{1-bit SGD}, which uses one bit to represent each element in the gradient but still achieves almost the same convergence speed with the uncompressed one. In \citet{stich2018sparsified}, authors proposed a general algorithm for error compensated compression---\textbf{MEM-SGD} with theoretical analysis. They found that \textbf{MEM-SGD} has the same asymptotic convergence rate with uncompressed SGD, and more importantly, is robust to both biased and unbiased compression operators. This method can be combined with decentralized training \citep{ec_decentralize}, local SGD \citep{ec_local}, and accelerated algorithms \citep{ec_linearly}.  Due to its promising efficiency, error compensation has been applied into many related area ~\citep{NIPS2019_9321,9051706,NIPS2019_8694,8884924,NIPS2019_9473,NIPS2019_8598,NIPS2019_9610,NIPS2019_9571} to reduce the communication cost. 

{Another direction to minimize the side-effect of compression is to compress the difference between the current gradient and momentum. Previous work \citep{diana,dore} proved that this strategy achieves linear convergence when the loss function is strongly convex. However, the linear convergence requires either the full deterministic gradient, or using SVRG \citep{horvath2019stochastic} to control the variance when for finite sum the loss function. Therefore how they perform with general variance reduced stochastic algorithms is still an open problem. }

\section{Algorithm Design}\label{sec:alg_design}
\subsection{Background}
The underlying problem we consider can be posed as the following distributed optimization problem:
\begin{equation}
\min_{\bm{x}}\quad f(\bm{x}) = \frac{1}{n} \sum_{i=1}^n \underbrace{\mathbb{E}_{\xi^{(i)}\sim\mathcal{D}_i}F\left(\bm{x}; \xi^\bi\right)}_{:=f_i(\x)},\label{eq:main}
\end{equation}
% \cjlcomment{Might be ideal to denote $\xi^\bi$ instead of $\xi$, since it varies across the workers}
% \thlcomment{Enter here ...}
% \jicomment{Enter here ...}
where $n$ is the number of workers, $\mathcal{D}_i$ is the local data distribution for worker $i$ { (in other words, we do not assume that all nodes can access the same data set)}, and $F(\bm{x}; \xi^\bi )$ is the local loss function of model $\bm{x}$ given data $\xi^{(i)}$ for worker $i$.

One widely used method for solving \eqref{eq:main} is SGD, which updates the model using
$
\x_{t+1} = \x_t - \gamma \g_t,
$
where $\x_t$ is the model at the $t$-th iteration, $\gamma$ is the learning rate and $\g_t$ is the averaged stochastic gradients.
However, SGD suffers from the problem of large stochastic gradient variance, and many variance reduced algorithms are developed. In our paper, we focus on the following variance reduced algorithms: {Momentum SGD}, STROM, ROOT-SGD{\footnote{STORM (for non-convex loss function) and ROOT-SGD (strongly-convex loss function) use the same updating rule except $\alpha_t = 1/{T^{\frac{2}{3}}}$ for STORM and $\alpha_t = 1/T$ for ROOT-SGD.}}, and IGT, because they all construct the estimator using a moving average of previous stochastic gradients, which can be summarized as
 \begin{align*}
\v_t =&  (1-\alpha_t)\v_{t-1} + \alpha_t \A(\x_t;\xi_t),\quad
\x_{t+1} =  \x_t - \gamma\v_t,\numberthis\label{update:general}
\end{align*}
where $\v_t$ is the gradient estimator, $\A(\x_t;\xi_t)$ is a variable that depends on the history models $\x_s$ (for all $s\leq t$) and data sample $\xi_t$, and $\alpha_t$ is a scalar. Notice that $\A(\x_t;\xi_t)$ and $\alpha_t$ are designed differently for different algorithms. We list the different choices of those parameters for each algorithm in Table~\ref{table:at_and_bt}. 

\begin{table*}[!ht]
\centering
\begin{tabular}{|c|c|c|}
\hline
                      & $\alpha_t$    & $\A(\x_t;\xi_t)$ \\ \hline
SGD & $1$ & $\nabla F(\x_t;\xi_t)$ \\ \hline
{Momentum SGD} & $\alpha$ & $\nabla F(\x_t;\xi_t)$ \\ \hline
STORM       & $\alpha$ & $ \frac{1}{\alpha_t}\left(\nabla F(\x_t;\xi_t) - (1-\alpha_t)\nabla F(\x_{t-1};\xi_t) \right)$  \\ \hline
{ROOT-SGD}      & $1/t$      & $\frac{1}{\alpha_t}\left(\nabla F(\x_t;\xi_t) - (1-\alpha_t)\nabla F(\x_{t-1};\xi_t) \right)$  \\ \hline
{IGT}          & $\alpha$      & $\nabla F\left(\x_t + \frac{1-\alpha_t}{\alpha_t}(\x_t - \x_{t-1});\xi_t \right)$  \\ \hline
\end{tabular}
\caption{Different choices of $(\alpha_t,\A(\x_t;\xi_t))$ for each variance reduced algorithm.}\label{table:at_and_bt}
\end{table*}

% \subsection{Communication Compression}
Recent work suggests that instead of compressing the gradient directly \citep{qsgd}, using error compensation \citep{stich2018sparsified} could potentially improve the convergence speed. The idea of error compensation is quite straightforward: adding back the compression error from the previous step to compensate the side-effect of compression.
Denoting $C_\omega[\cdot]${\footnote{Here $\omega$ denotes the randomness of the compression operator.}} as the compressing operator, the updating rule of this method follows
\begin{align*}
\x_{t+1} = \x_t - \gamma C_\omega[\g_t + \bdelta_{t-1}], \quad\bdelta_t = \g_t + \bdelta_{t-1} - C_\omega[\g_t + \bdelta_{t-1}],
\end{align*}
where  $\bdelta_t$ is the compression error at the $t$-th step. Moreover, recent works \citep{scalecom,lowpass_ec}  find that  adding a low-pass filter to the history compression error could be helpful for stabilizing the training and further improving the training speed. Their updating rule can be summarized as
\begin{align*}
\e_t =& (1-\beta)\e_{t-1} + \beta\bdelta_{t-1},\quad
\x_{t+1} = \x_t - \gamma C_\omega[\g_t + \e_t],\quad
\bdelta_t = \g_t + \e_t - C_\omega[\g_t + \e_t],
\end{align*}
where $\beta$ is a hyper-parameter of the low-pass filter.

Notice that even error compensated compression has been proved to be very powerful for accelerating the compressed training, it was only studied for SGD, which is a special case of \eqref{update:general} when $\alpha_t=1$. For the case where $\alpha_t <1$, previous work \citep{ecmomentum,zhao2019convergence,Wang2020SlowMo:} adapted the same idea from  \citet{stich2018sparsified}, which directly compresses the gradient estimator $\v_t$  with the compression error from the previous step being compensated  (we refer this method as \textbf{Single Compensation}). 
However, we find that when $\alpha_t$ is very small ($\mathcal{O}\left(1/T\right)$ for ROOT-SGD and $\mathcal{O}\left(1/T^{\frac{2}{3}}\right)$ for STORM, where $T$ is the total training iterations), \single~would be much slower than the uncompressed one, as shown in Figure~\ref{fig:single-fails}. %In order to solve this problem, we propose \CR.

\begin{figure*}[!ht]
\centering
\subfloat[]{
\begin{minipage}[t]{0.45\linewidth}
\centering
\includegraphics[width=1\textwidth]{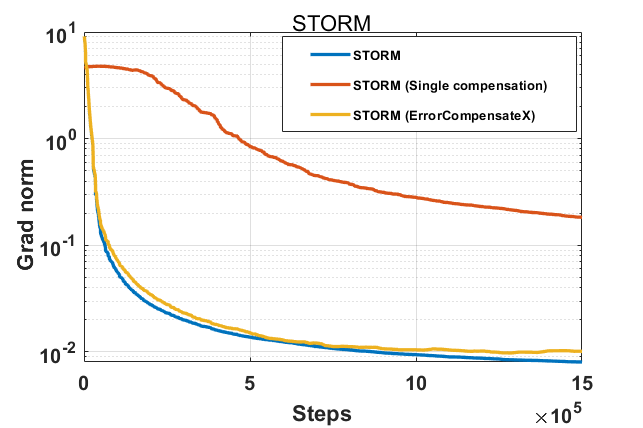}
\end{minipage}
}
\subfloat[]{
\begin{minipage}[t]{0.45\linewidth}
\centering
\includegraphics[width=1\textwidth]{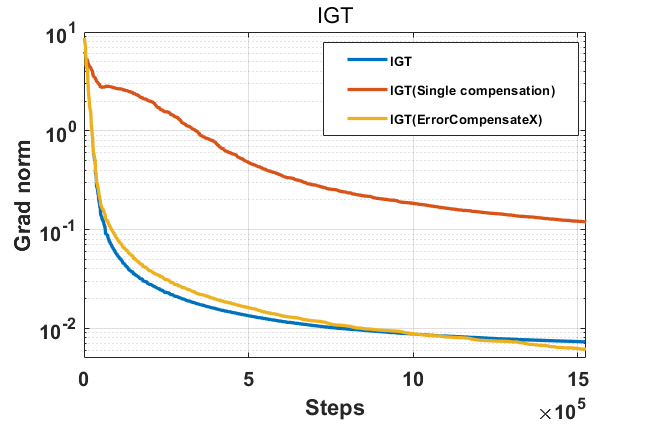}
\end{minipage}%
}
\centering
\caption{Convergence speed  comparison on linear regression for STORM and IGT with different compression techniques. The $x$-axis is the training step number, and the $y$-axis is the norm of the full gradient. The batch size equals $1$, $\alpha_t=1/t$, and we use 1-bit compression as described in \citet{tang2019doublesqueeze}. Single Compensation means only compensate the compression error from the last step, and we can see that it admits a much slower convergence speed than the uncompressed one when $\alpha_t$ is very small. However, our proposed \CR~has a similar convergence rate as the uncompressed one.  }\label{fig:single-fails}\vspace{-0.3cm}
\end{figure*}

\subsection{Proposed \CR }
In this section, we provide the intuition behind our proposed method \CR. For simplicity, we consider the general update in~\eqref{update:general} with $\alpha_t=\alpha$. Let $\v_{-1}=0$, then the uncompressed algorithm has the update
\begin{align*}\x_{T} = \x_0 -  \gamma\sum_{t=0}^{T-1}\sum_{s=0}^t\alpha(1-\alpha)^{t-s}\A(\x_s;\xi_s).
\end{align*}
Note $\A(\x_s;\xi_s)$ is transferred from worker nodes to the server (here we aggregate the data from all workers). If the compression is applied on $\A(\x_s;\xi_s)$, then we have the update 
\begin{align*}
\x_{T} =& \x_0 -  \gamma\sum_{t=0}^{T-1}\sum_{s=0}^t\alpha(1-\alpha)^{t-s}(\A(\x_s;\xi_s)-\bdelta_s)\\
= &\x_0 -  \gamma\sum_{t=0}^{T-1}\sum_{s=0}^t\alpha(1-\alpha)^{t-s}\A(\x_s;\xi_s)+\gamma\sum_{s=0}^{T-1}(1-(1-\alpha)^{T-s})\bdelta_s. 
\end{align*}
Note that the sequence $\{\x_s\}_{s=0}^T$ is different from that of the uncompressed one. Here $\bdelta_s$ is the compression error, and the compressed data $\A(\x_s;\xi_s)-\bdelta_s$ is transferred to the server. With error compensation and additional $\bdelta_{-1}=0$, we have 
\begin{align*}
\x_{T} =& \x_0 -  \gamma\sum_{t=0}^{T-1}\sum_{s=0}^t\alpha(1-\alpha)^{t-s}(\A(\x_s;\xi_s)+\bdelta_{s-1}-\bdelta_s)\\
= &\x_0 -  \gamma\sum_{t=0}^{T-1}\sum_{s=0}^t\alpha(1-\alpha)^{t-s}\A(\x_s;\xi_s)+\gamma\sum_{s=0}^{T-1}\alpha(1-\alpha)^{T-1-s}\bdelta_s. 
\end{align*}
Here $\bdelta_s$ is the compression error that occurs while compressing $\A(\x_s;\xi_s)+\bdelta_{s-1}$.
When $\alpha=1$, the last term disappears, and the standard SGD case has been analyzed in~\cite{tang2019doublesqueeze, martinmemory}. In this paper, we focus on the case where the last term does not disappear, i.e., $\alpha\in(0,1)$. Given $\bdelta_{-1}=\bdelta_{-2}=0$, when using \CR, we have
\begin{align*}
\x_{T} =& \x_0 -  \gamma\sum_{t=0}^{T-1}\sum_{s=0}^t\alpha(1-\alpha)^{t-s}(\A(\x_s;\xi_s)+(1-\alpha)(\bdelta_{s-1}-\bdelta_{s-2})+\bdelta_{s-1}-\bdelta_s)\\
%= &\x_0 -  \gamma\sum_{t=0}^{T-1}\sum_{s=0}^t\alpha(1-\alpha)^{t-s}\A(\x_s;\xi_s)+\gamma\sum_{s=0}^{T-1}\alpha(1-\alpha)^{T-1-s}\bdelta_s\\
= &\x_0 -  \gamma\sum_{t=0}^{T-1}\sum_{s=0}^t\alpha(1-\alpha)^{t-s}\A(\x_s;\xi_s)+\alpha\gamma\bdelta_{T-1}.
\end{align*}
Notice that in the discussion above, we make an implicit assumption that there is only one worker in the training. For the case of multiple workers, we adapt the same strategy proposed by \citet{tang2019doublesqueeze}, which uses a Parameter-Server communication prototype and applies error compensation for both rounds of worker-server communication.

For changing $\alpha_t$, the updating rule of  {\CR} with a low-pass filter follows
\begin{align*}
\e_t =& (1-\beta)\e_{t-1} + \beta  \left(\frac{\alpha_{t-1}}{\alpha_t}(2-\alpha_t)\bdelta_{t-1} - \frac{\alpha_{t-2}}{\alpha_t} (1-\alpha_t)\bdelta_{t-2}\right),\quad \Delta_t=  \mathscr{A}(\x_t;\xi_t)+ \e_t\numberthis,\\
\v_t =&(1-\alpha_t)\v_{t-1} +  \alpha_t C_{\omega}[\Delta_t],\quad
\bdelta_t = \Delta_t  -C_{\omega}[ \Delta_t] ,\quad
\x_{t+1} = \x_t - \gamma\v_t.%\label{update:cr}
\end{align*}
Here $\e_t$ is the compression error being passed after the low-pass filter and $\beta$ is the hyper-parameter of the low-pass filter.
Our analysis indicates that for a general case with $0 < \alpha_t \leq 1$, using just the compression error from the last step is not good enough, especially when $\alpha_t$ is very small.  In this case,  \CR~admits a much faster convergence speed than \single~and could achieve the same asymptotic speed with the uncompressed one. 
We assume a Parameter-Server communication prototype for the parallel implementation of \CR, and the detailed algorithm description can be find in Algorithm~\ref{alg1}.

\begin{algorithm*}[]
%\scriptsize
\caption{{\CR} for general $\mathscr{A}(\x;\xi)$}\label{alg1}
\begin{minipage}{1.0\linewidth}
%\small
\begin{algorithmic}[1]
\STATE {\bfseries Input:} Initialize $\bm{x}_0$, learning rate $\gamma$, loss-pass filter parameter $\beta$, initial error on workers $\bm{\delta}_{-1}^{(i)} = \bdelta_{0}^{(i)} = \bm{0}$, initial error on the server $\bm{\delta}_{-1} = \bdelta_{0} = \bm{0}$, scheduler of $\{\alpha_t\}_{t=-1}^T$,   and number of total iterations $T$. The initial gradient estimator $\v_0 = \nabla F\left(\x_0;\xi_0^\bi\right)$ with total batch-size $B_0$, initial compression error buffer $\e_0^{(i)}=\bm0$ and $\e_0=\bm0$.
\FOR{$t = 1,2,\cdots,T-1$}
\STATE \textbf{On worker $i$}:
\STATE\qquad Pass the worker compression error into the low-pass filter: $\e_{t}^\bi =(1-\beta)\e_{t-1}^\bi + \beta  \left(\frac{\alpha_{t-1}}{\alpha_t}(2-\alpha_t)\bdelta_{t-1}^\bi - \frac{\alpha_{t-2}}{\alpha_t} (1-\alpha_t)\bdelta_{t-2}^\bi\right).$
\STATE \qquad Compute the error-compensated local gradient estimator:
$
\Delta^{(i)}_t =\mathscr{A}\left(\x_t;\xi_t^{(i)}\right) +\e_t^\bi.
$
\STATE \qquad Compress $\Delta^{(i)}_t$ into $C_{\omega}\left[\Delta^{(i)}_t\right]$ and update the local worker error $\bdelta^{(i)}_t\gets \Delta^{(i)}_t - C_{\omega}\left[\Delta_t^{(i)}\right]$.
\STATE \qquad Send $C_{\omega}\left[\Delta_t^{(i)}\right]$ to the parameter server.
\STATE \textbf{On parameter server:}
\STATE \qquad Pass the server compression error into the low-pass filter: $\e_{t} =(1-\beta)\e_{t-1} + \beta  \left(\frac{\alpha_{t-1}}{\alpha_t}(2-\alpha_t)\bdelta_{t-1} - \frac{\alpha_{t-2}}{\alpha_t} (1-\alpha_t)\bdelta_{t-2}\right).$
\STATE \qquad Average all gradient estimator received from workers:
$
    \Delta_t = \frac{1}{n}\sum_{i=1}^n C_{\omega}\left[\Delta_t^{(i)}\right] + \e_t.
$
\STATE \qquad Compress $\Delta_t$ into $C_{\omega}\left[\Delta_t\right]$ and update the server error $\bdelta_t = \Delta_t - C_{\omega}\left[\Delta_t\right]$
\STATE \qquad Send $C_{\omega}\left[\Delta_t\right]$ to workers
\STATE \textbf{On worker $i$:}
\STATE \qquad Update the gradient estimator $\v_t = (1-\alpha_t)\v_{t-1} + \alpha_t C_{\omega}\left[\Delta_t\right]$.
\STATE \qquad Update the model $\bm{x}_{t+1} = \bm{x}_t - \gamma \v_t$.
\ENDFOR
\STATE {\bfseries Output:} $\bm{x}_T$
\end{algorithmic}
\end{minipage}
\end{algorithm*}

%\subsection{Notations}
Throughout this paper and its supplementary document, we let $f^*=\min_{\x} f(\x)$ be the optimal objective value. In addition, we let $\|\cdot\|$ denote the $l_2$ norm of vectors, and $\lesssim$ means ``less than or equal to up to a constant factor''.
%\begin{itemize}
%\item $\nabla f(\bm{x})$ denotes the gradient of a function $f(\cdot)$.
% \item $\g_t$
%\item $f^*$ denotes the optimal solution to \eqref{eq:main}.
%\item $\|\cdot\|$ denotes the $l_2$ norm for vectors.
% \item $\|\cdot\|_2$ denotes the spectral norm for matrix.
% \item $f_i(\bm{x}):=\mathbb E_{\zeta\sim\mathcal{D}_i} F(\bm{x};\bm{\zeta})$.
%\item $\lesssim$ means ``less than equal to up to a constant factor''.
%\end{itemize}

% \section{Algorithm Design}
% In this section, we first intuitively explain why \CR is better than the previous error compensation methods. Then we will introduce the detailed algorithm of \CR under a parameter-server communication prototype.

\section{Theoretical Analysis}\label{sec:theo}
In this section, we first summarize the general updating rule for each algorithm w/o error compensation. Then we will present a general theorem that is essential for getting the convergence rate of each algorithm. At last, we will see that by using \CR,  we could achieve the same asymptotic convergence rate with training without communication compression using both biased and unbiased compression.

All error compensation algorithms mentioned so far can be summarized into the formulation of
\begin{align*}
\e_t = & (1-\beta)\e_{t-1} + \beta\left(c_{1,t}\bdelta_{t-1} - c_{2,t}\bdelta_{t-2}  \right),\numberthis\label{update:general_compress_v_t}\\
\v_t = &(1 - \alpha_t)\v_{t-1} + \alpha_t\A(\x_t;\xi_t) - \eta_{1,t}\bdelta_t+ \eta_{2,t}\e_t,\quad
\x_{t+1} = \x_t - \gamma\v_t.\numberthis\label{update:general_compress_x_t}
\end{align*}
Here $(1 - \alpha_t)\v_{t-1} + \alpha_t\A(\x_t;\xi_t)$ originates from the original uncompressed algorithm. Other terms that depends on the compression error $\bdelta_t$ indicates the influence of compression (see our supplementary material for more details about the parameter setting for different compression algorithms.)

\subsection{Convergence Rate of General Algorithms}
In this section, we decompose the convergence rate of \CR~into two parts: $\mathcal{R}_{\text{uncompressed}}(T)$ that only depends on the original algorithm without compression, and $\mathcal{R}_{\epsilon}$ which only depends on the way of error compensation.

Instead of investigating $\x_t$ directly, we introduce an auxiliary sequence $\{\hat{\x}_t\}$, which is defined as
\begin{align*}
\u_{t} = (1-\alpha_t)\u_{t-1} + \alpha_t\A(\x_t;\xi_t),\quad \hat{\x}_t =  \hat{\x}_{t-1} - \gamma\u_t, \numberthis\label{update:u_t}
\end{align*}
and set $\hat{\x}_0 = \x_0$, $\u_0 = \v_0$ for initialization. Below we are going to see that $\{\hat{\x}_t\}$ admits some nice properties and is very helpful for the analysis of the convergence rate for $\{\x_t\}$.

For $\hat{\x}_t$, there seems to be no compression error, but we still compute the gradient estimator using $\x_t$ instead $\hat{\x}_t$ directly. Therefore we also define
\begin{align*}
\hat{\u}_{0} = \A(\x_0;\xi_0) = \A(\hat{\x}_0;\xi_0),\quad
\hat{\u}_{t} =  (1-\alpha_t)\hat{\u}_{t-1} + \alpha_t\A(\hat{\x}_t;\xi_t).\numberthis\label{def:u_y}
\end{align*}
In this case, in the view of $\{\hat{\x}_t\}$, it basically updates the model following
\begin{align*}
\hat{\u}_{t} =  (1-\alpha_t)\hat{\u}_{t-1} + \alpha_t\A(\hat{\x}_t;\xi_t),\quad
\hat{\x}_t = \hat{\x}_{t-1} - \gamma\hat{\u}_{t} - \underbrace{\gamma(\u_t - \hat{\u}_{t})}_{\textbf{compression bias}}.
\end{align*}
Therefore $\{\hat{\x}_t\}$ is essentially using the same uncompressed gradient estimator $\A(\hat{\x}_t;\xi_t)$ for constructing the gradient estimator $\hat{\u}_{t}$ except that when it updates the model, it will use $\hat{\u}_{t}$ plus a compression bias term $\gamma(\u_t - \hat{\u}_{t})$.

Now we present  the key theorem that concludes the convergence rate of any algorithm (for  non-convex  loss function) that updates the model according to
 \eqref{update:general_compress_v_t} and \eqref{update:general_compress_x_t}. 
 %Notice that this is a general theorem that can be applied to all the compression method mentioned above.
% \subsection{Assumptions}
Before we introduce the final theorem, we first make some commonly used assumptions:
\begin{assumption}\label{ass:main}
We assume $f^*>-\infty$ and make the following assumptions:
\begin{enumerate}
\item \textbf{Lipschitzian gradient:} $f(\cdot)$ and $F(\cdot;\xi)$ are assumed to be  with $L$-smooth, which means
  \begin{align*}
  \|\nabla F(\bm{x};\xi) - \nabla F(\bm{y};\xi) \| \leq& L_F \|\bm{x} - \bm{y} \|,\quad \forall \bm{x}, \bm{y}, \xi,\\
  \|\nabla f_i(\bm{x}) - \nabla f_i(\bm{y}) \| \leq& L \|\bm{x} - \bm{y} \|,\quad \forall \bm{x},\bm{y},i;
  \end{align*}
 \item\label{ass:var} \textbf{Bounded variance:}
The variance of the stochastic gradient is bounded
\begin{align*}
\mathbb E_{\xi^{(i)}\sim\mathcal{D}_i}\|\nabla F(\bm{x};\xi^{(i)}) - \nabla f(\bm{x})\|^2 \leq& \sigma^2,\quad\forall \bm{x},i.
\end{align*}
% \cjlcomment{What is the definition of $f_i$? Should define carefully}

% \cjlcomment{Are above assumptions standard? Is the third "linear" compression reasonable? Left hand seems missing a square?}
\end{enumerate}
\end{assumption}
\begin{theorem}\label{theo:general_nonconvex}
For algorithms that follow the updating rule \eqref{update:general_compress_v_t} and \eqref{update:general_compress_x_t}, under Assumption~\ref{ass:main},   we have
\begin{align*}
\frac{1}{T}\sum_{t=0}^{T-1}\E\|\nabla f(\x_t)\|^2
\leq & \underbrace{\frac{16(f(\x_0) - f^*)}{\gamma T}+ \frac{8}{T}\sum_{t=0}^{T-1} A_t}_{\mathcal{R}_{\text{uncompressed}}(T)}
+ \underbrace{\frac{64 L^2_{\A} + 2L^2}{T}\sum_{t=0}^{T-1}\E\|\x_t - \hat{\x}_t \|^2}_{\mathcal{R}_\epsilon},
\end{align*}
where $A_t$ is defined as
\begin{align*}
A_t := &  \E\|\nabla f(\hat{\x}_t) - \hat{\u}_{t}\|^2 - (1-2L\gamma)\E\|\hat{\u}_{t}\|^2- \frac{\mathbb{E} \|\nabla f(\hat{\x}_t)\|^2}{4}
\end{align*}
with $\hat{\u}_t$ defined in \eqref{def:u_y}, and $L_\A$ is the Lipschitz constant of $\A(\cdot;\xi)$, i.e., 
\begin{align*}
\|\A(\x;\xi) - \A(\y;\xi)\|\leq & L_\A\|\x - \y\|,\quad\forall \x,\y,\xi.
\end{align*}
\end{theorem}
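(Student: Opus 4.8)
The plan is to bound the per-step decrease of $f$ along the auxiliary sequence $\{\hat{\x}_t\}$ rather than along $\{\x_t\}$ directly, then transfer the bound back to $\{\x_t\}$ using the smoothness of $f$ and the explicit compression-bias term $\gamma(\u_t-\hat{\u}_t)$. The auxiliary sequence has been constructed precisely so that it obeys a clean uncompressed-style recursion $\hat{\x}_t=\hat{\x}_{t-1}-\gamma\hat{\u}_t$, where $\hat{\u}_t$ is a pure moving average of the stochastic gradients $\A(\hat{\x}_s;\xi_s)$ with no $\bdelta$-terms. So the natural first move is the descent lemma: from $L$-smoothness of $f$,
\begin{align*}
f(\hat{\x}_t)\leq f(\hat{\x}_{t-1})-\gamma\langle\nabla f(\hat{\x}_{t-1}),\hat{\u}_t\rangle+\tfrac{L\gamma^2}{2}\|\hat{\u}_t\|^2.
\end{align*}

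**Next I would** rewrite the inner product to surface the gradient norm and the estimator error. Using the identity $-\langle\nabla f,\hat{\u}\rangle=\tfrac12\|\nabla f-\hat{\u}\|^2-\tfrac12\|\nabla f\|^2-\tfrac12\|\hat{\u}\|^2$ (or a weighted variant to get the $\tfrac14$ coefficient appearing in $A_t$), I can convert the descent inequality into something whose right-hand side is governed by $\E\|\nabla f(\hat{\x}_t)-\hat{\u}_t\|^2$, by $\E\|\hat{\u}_t\|^2$ with coefficient $(1-2L\gamma)$, and by $-\tfrac14\E\|\nabla f(\hat{\x}_t)\|^2$ — which is exactly the grouping defining $A_t$. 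Summing over $t$ and telescoping $f(\hat{\x}_t)-f(\hat{\x}_{t-1})$ gives $\tfrac{f(\x_0)-f^*}{\gamma T}$ up to constants, yielding $\mathcal{R}_{\text{uncompressed}}(T)$ together with $\tfrac1T\sum_t A_t$. Here I would lean on $\hat{\x}_0=\x_0$ and $f(\hat{\x}_T)\geq f^*$ to close the telescope.

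**The transfer step** is where the compression error enters. I want $\tfrac1T\sum_t\E\|\nabla f(\x_t)\|^2$, but the descent argument produces $\tfrac1T\sum_t\E\|\nabla f(\hat{\x}_t)\|^2$. I would bridge the gap with
\begin{align*}
\|\nabla f(\x_t)\|^2\leq 2\|\nabla f(\hat{\x}_t)\|^2+2L^2\|\x_t-\hat{\x}_t\|^2,
\end{align*}
using $L$-smoothness of $f$; this is the source of the $2L^2$ term in $\mathcal{R}_\epsilon$. Similarly, wherever the descent bound references $\hat{\u}_t$ or $\nabla f(\hat{\x}_t)-\hat{\u}_t$, I must relate it to the compressed quantities, and the discrepancy is controlled by $\|\u_t-\hat{\u}_t\|$; unwinding the moving-average recursions for $\u_t$ and $\hat{\u}_t$ and applying the $L_\A$-Lipschitz property of $\A(\cdot;\xi)$ converts $\|\u_t-\hat{\u}_t\|$ into accumulated $\|\x_s-\hat{\x}_s\|$ contributions, producing the $64L_\A^2$ coefficient.

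**The main obstacle** I expect is this last bookkeeping: carefully tracking how the per-step gaps $\u_t-\hat{\u}_t$ and $\A(\x_t;\xi_t)-\A(\hat{\x}_t;\xi_t)$ propagate through the geometric $(1-\alpha_t)$-weighted recursions without the constants blowing up, and doing so uniformly in the (possibly vanishing) $\alpha_t$. The delicate point is that a naive bound on $\|\u_t-\hat{\u}_t\|$ could pick up factors of $1/\alpha_t$, which would be catastrophic when $\alpha_t=\mathcal{O}(1/T)$; the argument must exploit cancellation so that everything collapses cleanly into $\sum_t\E\|\x_t-\hat{\x}_t\|^2$ with the stated absolute constants. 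Importantly, this theorem only bounds the error in terms of $\E\|\x_t-\hat{\x}_t\|^2$ and does not yet estimate that quantity — the separate control of $\|\x_t-\hat{\x}_t\|$ in terms of the compression magnitude $\epsilon$ (which is what makes $\mathcal{R}_\epsilon$ genuinely small for \CR) is deferred, so I would not attempt it here.
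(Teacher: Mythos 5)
Your plan matches the paper's proof essentially step for step: the descent lemma along $\{\hat{\x}_t\}$, the polarization identity to produce the grouping that defines $A_t$, the telescoping to $\frac{f(\x_0)-f^*}{\gamma T}$, the transfer $\|\nabla f(\x_t)\|^2\leq 2\|\nabla f(\hat{\x}_t)\|^2+2L^2\|\x_t-\hat{\x}_t\|^2$, and the bound $\sum_t\E\|\u_t-\hat{\u}_t\|^2\leq 2L_\A^2\sum_t\E\|\x_t-\hat{\x}_t\|^2$ obtained by unrolling the $(1-\alpha)$-recursion with Young's inequality exactly as you anticipate (the $1/\alpha$ from Young is absorbed by the geometric sum, so no blow-up for small $\alpha$). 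One local correction: the auxiliary recursion is $\hat{\x}_{t+1}=\hat{\x}_t-\gamma\u_t$, not $-\gamma\hat{\u}_t$, so the descent lemma must be written with $\u_t$ and then split as $\hat{\u}_t+(\u_t-\hat{\u}_t)$ — which is what your subsequent ``compression bias'' discussion does in any case.
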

Here $A_t$, which depends only on the original uncompressed algorithm,  indicates the bias of constructing the gradient estimator using uncompressed $\A(\hat{\x}_t;\xi_t)$, hence $\frac{16(f(\x_0) - f^*)}{\gamma T}+ \frac{8}{T}\sum_{t=0}^T A_t$ is usually  the convergence rate of the original algorithm without compression.

In order to get a clear vision about the influence of the compression error under different error compensation methods, in the theorem below we are going to give an upper bound of $\sum_{t=0}^T\E\|\x_t - \hat{\x}_t \|^2$ for different error compensation methods. 
\begin{assumption}\label{ass:error_magnitude}
The magnitude of the compression error are assumed to be bounded by a constant $\epsilon$:
\begin{align*}
\mathbb E_{\omega} \left\|\bdelta_t^\bi\right\|^2\leq  \frac{\epsilon^2}{2},\quad\mathbb E_{\omega} \left\|\bdelta_t^\bi\right\|^2\leq  \frac{\epsilon^2}{2},\forall t, i.
\end{align*} 
\end{assumption}
\begin{theorem}\label{theo:compression}
For algorithms that follow the updating rule \eqref{update:general_compress_v_t} and \eqref{update:general_compress_x_t}, under Assumption~\ref{ass:error_magnitude}, setting $\beta = 1$, $\eta_{1,t} = \eta_1$, $\eta_{2,t} = \eta_2$, $c_{1,t} = c_1$, and $c_{2,t} = c_2$, we have
\begin{align*}
\x_t - \hat{\x}_t
=& -\frac{\eta_1}{\alpha}\sum_{s=0}^t\left( 1- (1-\alpha)^{t-s+1}\right)\bdelta_s+ \frac{\eta_2c_1}{\alpha}\sum_{s=0}^{t-1}\left( 1- (1-\alpha)^{t-s}\right)\bdelta_s\\
&+ \frac{\eta_2c_2}{\alpha}\sum_{s=0}^{t-2}\left( 1- (1-\alpha)^{t-s-1}\right)\bdelta_s.
\end{align*}
More specifically, we have
\begin{itemize}
\item \textbf{No Compensation:}
$
\sum_{t=0}^T\E\|\x_t - \hat{\x}_t \|^2\leq \frac{\gamma^2T^2\epsilon^2}{\alpha^2};
$
\item \textbf{Single Compensation:}
$
\sum_{t=0}^T\E\|\x_t - \hat{\x}_t \|^2\leq \frac{\gamma^2\epsilon^2}{\alpha^2};
$
\item \textbf{\CR:}
$
\sum_{t=0}^T\E\|\x_t - \hat{\x}_t \|^2\leq \gamma^2\alpha^2\epsilon^2.
$
\end{itemize}
\end{theorem}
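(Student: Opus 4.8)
The plan is to track the gap between the two gradient estimators and then promote it to a gap between the iterates. Writing $\bm{w}_t := \v_t - \u_t$ and subtracting the auxiliary recursion \eqref{update:u_t} from the compressed recursion \eqref{update:general_compress_x_t}, the terms $\alpha_t\A(\x_t;\xi_t)$ cancel exactly — this is the whole point of defining $\u_t$ from the \emph{same} stochastic vectors at the \emph{same} points $\x_t$ — leaving the purely-in-$\bdelta$ linear recursion
\begin{align*}
\bm{w}_t = (1-\alpha)\bm{w}_{t-1} - \eta_1\bdelta_t + \eta_2\e_t, \qquad \bm{w}_0 = \v_0 - \u_0 = 0.
\end{align*}
Since $\beta=1$ forces $\e_t = c_1\bdelta_{t-1} - c_2\bdelta_{t-2}$, the forcing term is the fixed combination $-\eta_1\bdelta_t + \eta_2 c_1\bdelta_{t-1} - \eta_2 c_2\bdelta_{t-2}$, which is where the three shifted families of terms in the closed form originate.

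Next I would convert the estimator gap into the iterate gap. Because $\x$ and $\hat{\x}$ are driven by $\v$ and $\u$ through the same step size $\gamma$, differencing the two model updates and telescoping shows $\x_t - \hat{\x}_t$ is $-\gamma$ times a running sum of $\bm{w}_s$, with the boundary contributions pinned down by $\hat{\x}_0=\x_0$ and $\u_0=\v_0$. I would then solve the linear recursion explicitly, $\bm{w}_s = \sum_{j\le s}(1-\alpha)^{s-j}\big(-\eta_1\bdelta_j + \eta_2 c_1\bdelta_{j-1} - \eta_2 c_2\bdelta_{j-2}\big)$, substitute into the running sum, swap the order of summation, and evaluate the inner geometric series $\sum_m (1-\alpha)^m = \tfrac{1}{\alpha}\big(1-(1-\alpha)^{\,\cdot}\big)$; this is exactly what manufactures the factor $\tfrac{1}{\alpha}\big(1-(1-\alpha)^{t-s+1}\big)$. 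Re-indexing the shifted terms $\bdelta_{j-1},\bdelta_{j-2}$ and using $\bdelta_{-1}=\bdelta_{-2}=0$ to absorb the negative indices then yields the stated three-sum closed form (up to the overall $\gamma$, which is needed for consistency with the $\gamma^2$ appearing in the bounds, and a sign on the $c_2$ term).

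With the closed form in hand the three cases reduce to inserting the scheme-specific constants $(\eta_1,\eta_2,c_1,c_2)$ and reading off the coefficient of each $\bdelta_s$. Using the identity $P_k - (1-\alpha)P_{k-1} = \alpha\sum_{s\le k}\bdelta_s$ for the first partial sum $P_k$, one checks: for \textbf{No Compensation} ($\eta_2=0$) nothing cancels and the weight of $\bdelta_s$ is $\tfrac{1}{\alpha}\big(1-(1-\alpha)^{t-s+1}\big)\approx \tfrac1\alpha$, an undamped accumulation of size $\mathcal{O}(t/\alpha)$; for \textbf{Single Compensation} the pair $(-\eta_1\bdelta_t,\ \eta_2 c_1\bdelta_{t-1})$ is tuned so that consecutive partial sums differ by one power of $(1-\alpha)$, collapsing $\x_t-\hat{\x}_t$ to the genuinely geometric $-\gamma\sum_{s}(1-\alpha)^{t-s}\bdelta_s$ of uniformly bounded size $\mathcal{O}(1/\alpha)$; and for {\CR} the choice $c_1=2-\alpha,\ c_2=1-\alpha$ produces \emph{two} telescoping differences that annihilate the entire geometric tail, leaving the single current-step residual $\x_t-\hat{\x}_t = -\gamma\alpha\bdelta_t$. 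I would finish each case by invoking Assumption~\ref{ass:error_magnitude} to control $\E\|\bdelta_s\|^2$ and then estimating the (per-step, hence accumulated) squared norm via a triangle/Cauchy--Schwarz bound on the first sum and a summed-geometric-series bound $\sum_s (1-\alpha)^{2(t-s)} = \mathcal{O}(1/\alpha)$ on the other two.

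The main obstacle is the algebraic collapse in the case analysis rather than the recursion itself. The closed form carries three interleaved partial sums whose exponents carry offsets $t-s+1,\,t-s,\,t-s-1$ and whose upper limits are $t,\,t-1,\,t-2$; proving that they telescope — partially for Single Compensation down to a geometric weight and completely for {\CR} down to one term — requires combining the three families coefficient-by-coefficient and tracking the boundary contributions at $s=t,\,t-1$ and at $s$ near $0$ with care. The computation is delicate precisely because it is not robust: a single mismatched power of $(1-\alpha)$ turns an $\mathcal{O}(1/\alpha)$ weight into an $\mathcal{O}(\alpha)$ one, which is exactly the $1/\alpha^2$ gap between the three rates, so the bookkeeping must be exact. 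A secondary point is that the decay of the coefficients is what governs how the per-step estimate accumulates over $t$, so the geometric versus undamped structure established above is what ultimately separates the {\CR}, Single, and No-Compensation bounds.
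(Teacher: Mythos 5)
Your proposal is correct and follows essentially the same route the paper takes: the supplement contains no standalone proof of Theorem~\ref{theo:compression}, but the derivation in Section~3.2 is exactly your computation (cancel the $\alpha_t\A(\x_t;\xi_t)$ terms against the auxiliary recursion, solve the resulting linear recursion in $\v_t-\u_t$ with forcing $-\eta_1\bdelta_t+\eta_2c_1\bdelta_{t-1}-\eta_2c_2\bdelta_{t-2}$, swap sums, evaluate the geometric series to produce the $\tfrac{1}{\alpha}\bigl(1-(1-\alpha)^{\cdot}\bigr)$ weights, and verify the coefficient-wise collapse for each choice of $(\eta_1,\eta_2,c_1,c_2)$, down to $-\gamma\alpha\bdelta_t$ for {\CR}), and you correctly identify that the displayed closed form is missing the overall factor $\gamma$ and carries a sign typo on the $c_2$ term. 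Two small cautions: since the $\bdelta_s$ are not assumed independent or mean-zero, the Single Compensation case must be closed via $\E\|\sum_s c_s\bdelta_s\|^2\le\bigl(\sum_s c_s\bigr)^2\max_s\E\|\bdelta_s\|^2\le\epsilon^2/\alpha^2$ (your Cauchy--Schwarz option) rather than the squared-geometric-series bound $\sum_s(1-\alpha)^{2(t-s)}\le 1/\alpha$, which would require orthogonality; and your (correct) argument produces the stated right-hand sides times an extra factor of $T$ — e.g.\ the {\CR} case sums $T+1$ terms each of size $\gamma^2\alpha^2\epsilon^2$ — which is a typo in the theorem as printed (the corollaries implicitly use the $T$-scaled version), not a flaw in your proof.
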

% From Theorem~\ref{theo:compression} we shall see that \CR greatly reduces the side-effect of compression, especially when $\alpha$ is very small. 

\subsection{Convergence Rate for Different Gradient Estimators}
In this section, we apply Theorem~\ref{theo:general_nonconvex} to get the specific convergence rate of \CR~for Momentum SGD, STORM, and IGT.
\subsubsection{SGD and Momentum SGD}
Since SGD is a special case of {Momentum SGD} when setting $\alpha = 1$, the following theorem includes both SGD and {Momentum SGD}.
\begin{theorem}\label{theo:momentum}
Setting $\A(\x_t;\xi_t) = \nabla F(\x_t;\xi_t)$, if $L\gamma\leq \frac{\alpha}{12}$, under Assumptions~\ref{ass:main} and ~\ref{ass:error_magnitude}, we have
\begin{align*}
\sum_{t=0}^{T-1}A_t\leq \frac{17\gamma L\sigma^2 T}{3n}.
\end{align*}
\end{theorem}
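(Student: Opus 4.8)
The plan is to track the momentum estimation error $\b_t:=\hat{\u}_t-\nabla f(\hat{\x}_t)$, whose squared norm is exactly the first term of $A_t$. Expanding the moving average $\hat{\u}_t=(1-\alpha)\hat{\u}_{t-1}+\alpha\nabla F(\hat{\x}_t;\xi_t)$ and inserting $\pm\nabla f(\hat{\x}_{t-1})$ gives
\[
\b_t=(1-\alpha)\b_{t-1}+(1-\alpha)\bigl(\nabla f(\hat{\x}_{t-1})-\nabla f(\hat{\x}_t)\bigr)+\alpha\bigl(\nabla F(\hat{\x}_t;\xi_t)-\nabla f(\hat{\x}_t)\bigr).
\]
The middle (drift) term is bounded by $L\gamma\|\hat{\u}_t\|$ via $L$-smoothness and $\hat{\x}_{t-1}-\hat{\x}_t=\gamma\u_t$ (identifying $\u_t$ with $\hat{\u}_t$ up to the compression discrepancy, which belongs to $\mathcal{R}_\epsilon$), while the last (noise) term averages $n$ independent worker gradients and so has conditional variance at most $\sigma^2/n$. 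Taking expectations and applying Young's inequality with weight $\rho=\alpha/(1-\alpha)$ to the contraction-plus-drift part yields the one-step recursion
\[
\E\|\b_t\|^2\le(1-\alpha)\E\|\b_{t-1}\|^2+\frac{(1-\alpha)^2L^2\gamma^2}{\alpha}\E\|\hat{\u}_t\|^2+\alpha^2\frac{\sigma^2}{n}.
\]

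The conceptual core is that $A_t$ is built to cancel the $\sigma^2$-level noise sitting inside $\E\|\b_t\|^2$. Since $\hat{\u}_t=\nabla f(\hat{\x}_t)+\b_t$, substituting into the definition of $A_t$ turns it into
\[
A_t=2L\gamma\,\E\|\b_t\|^2-2(1-2L\gamma)\,\E\langle\b_t,\nabla f(\hat{\x}_t)\rangle-\Bigl(\tfrac54-2L\gamma\Bigr)\E\|\nabla f(\hat{\x}_t)\|^2.
\]
The crucial gain is that the noisy quantity $\E\|\b_t\|^2$ now carries the coefficient $2L\gamma$ instead of $1$: the negative curvature term $-(1-2L\gamma)\E\|\hat{\u}_t\|^2$ has eaten the bulk of the variance, because $\hat{\u}_t$ and $\b_t$ share the same stochastic fluctuation. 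I would split $\b_t$ into its drift and martingale parts, bound the drift cross-term by Young's inequality against the $\|\nabla f(\hat{\x}_t)\|^2$ budget, and keep a reserved fraction of the recursion's negative $\E\|\hat{\u}_t\|^2$ to absorb the drift term $\frac{(1-\alpha)^2L^2\gamma^2}{\alpha}\E\|\hat{\u}_t\|^2$; the hypothesis $L\gamma\le\alpha/12$ makes $\frac{L^2\gamma^2}{\alpha^2}\le\frac1{144}\ll(1-2L\gamma)$, so these terms are comfortably dominated.

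Summing over $t$ and telescoping the recursion gives $\sum_t\E\|\b_t\|^2\lesssim\frac1\alpha\E\|\b_0\|^2+\frac{L^2\gamma^2}{\alpha^2}\sum_t\E\|\hat{\u}_t\|^2+\frac{\alpha\sigma^2T}{n}$, with $\E\|\b_0\|^2\le\sigma^2/n$ from the initial batch. Multiplying the variance part by the coefficient $2L\gamma$ collapses $\frac{\alpha\sigma^2T}{n}$ to $O(L\gamma\sigma^2T/n)$ --- exactly the scale of the claimed bound --- while the $\frac{L^2\gamma^2}{\alpha^2}\sum\E\|\hat{\u}_t\|^2$ contribution is removed by the reserved negative curvature. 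Carefully accounting for the geometric factor $\tfrac1{2-\alpha}$ from the noise sum, the initialization term, and the Young weights produces the stated constant $\tfrac{17}{3}$, i.e. $\sum_{t=0}^{T-1}A_t\le\frac{17\gamma L\sigma^2T}{3n}$.

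The step I expect to be the real obstacle is justifying that the noise term is conditionally centered. Because $\hat{\x}_t=\hat{\x}_{t-1}-\gamma\u_t$ depends on $\xi_t$, the increment $\nabla F(\hat{\x}_t;\xi_t)-\nabla f(\hat{\x}_t)$ is not mean-zero given the past, so the cross-terms $\E\langle\b_{t-1},\zeta_t\rangle$ and the martingale part of $\E\langle\b_t,\nabla f(\hat{\x}_t)\rangle$ do not vanish automatically. I would handle this with the per-sample $L_F$-smoothness assumption: evaluate the stochastic gradient at the predictable point $\hat{\x}_{t-1}$, where $\nabla F(\hat{\x}_{t-1};\xi_t)-\nabla f(\hat{\x}_{t-1})$ is genuinely centered, and move the residual $\|\nabla F(\hat{\x}_t;\xi_t)-\nabla F(\hat{\x}_{t-1};\xi_t)\|\le L_F\gamma\|\u_t\|$ into an extra drift term. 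Showing that these extra drift contributions still fit under the $L\gamma\le\alpha/12$ budget, together with pinning down the exact constant, is where the genuine work lies.
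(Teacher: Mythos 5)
Your plan is essentially the paper's own proof: the paper introduces $\m_t=\alpha\sum_{s\le t}(1-\alpha)^{t-s}\nabla f(\x_s)$, which is exactly the drift/martingale split of your $\b_t$ (so that $\u_t-\m_t$ is the conditionally centered part with $\E\|\u_t-\m_t\|^2\le\alpha\sigma^2$, and $\nabla f(\x_t)-\m_t$ is the drift controlled by $\frac{L^2\gamma^2}{\alpha^2}\sum_t\E\|\u_t\|^2$ via the same geometric contraction lemma), it handles the cross term $\E\langle\nabla f(\x_t),\u_t-\m_t\rangle$ by the same unbiasedness-driven recursion, and it uses $L\gamma\le\alpha/12$ to absorb the drift, arriving at $(\frac{2L\gamma}{\alpha}+\frac32+4\alpha)\gamma L\sigma^2T\le\frac{17}{3}\gamma L\sigma^2T$ with the dominant contribution coming from the cross term's drift part hitting the martingale variance rather than from the $2L\gamma\E\|\b_t\|^2$ coefficient you emphasize. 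The measurability obstacle you flag is real under the paper's stated definition $\hat{\x}_t=\hat{\x}_{t-1}-\gamma\u_t$ (which makes $\hat{\x}_t$ depend on $\xi_t$), but the paper does not do the extra $L_F$-smoothness surgery you anticipate --- it simply runs the argument treating the fresh sample as independent of the evaluation point --- and, relatedly, its appendix derivation ends at $\frac{17\gamma L\sigma^2T}{3}$ without the $1/n$ claimed in the theorem.
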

This leads us to the corollary below:
\begin{corollary}\label{coro:momentum}
For {\CR}, under Assumptions~\ref{ass:main} and~\ref{ass:error_magnitude}, setting $\alpha_t = \alpha\in(0,1]$, $\A(\x_t;\xi_t) = \nabla F(\x_t;\xi_t)$ and $\gamma=\min\left\{\frac{\alpha}{12L},\sqrt{\frac{n}{T\sigma^2}},\left(\frac{1}{\epsilon^2 T} \right)^{1/3} \right\}$, we have
\begin{align*}
\frac{1}{T}\sum_{t=0}^{T-1}\mathbb{E}\|\nabla f(\x_t)\|^2
\lesssim  \frac{\sigma}{\sqrt{nT}} + \frac{\alpha^2}{(\epsilon T)^{\frac{2}{3}}} + \frac{1}{\alpha T}.
\end{align*}
\end{corollary}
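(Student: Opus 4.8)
The plan is to read the corollary off as a direct specialization of the general machinery already in place, feeding the two per-algorithm estimates into the master bound. First I would invoke \Cref{theo:general_nonconvex} to write
$\frac1T\sum_{t=0}^{T-1}\E\|\nabla f(\x_t)\|^2 \le \mathcal R_{\text{uncompressed}}(T) + \mathcal R_\epsilon$.
Before using anything else I would check that the stated stepsize is admissible for the uncompressed estimate: since $\gamma=\min\{\tfrac{\alpha}{12L},\sqrt{n/(T\sigma^2)},(\epsilon^2T)^{-1/3}\}\le\tfrac{\alpha}{12L}$, the hypothesis $L\gamma\le\alpha/12$ of \Cref{theo:momentum} holds automatically, so that theorem applies with $\A(\x_t;\xi_t)=\nabla F(\x_t;\xi_t)$.

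Next I would bound the two blocks separately. For $\mathcal R_{\text{uncompressed}}(T)=\frac{16(f(\x_0)-f^*)}{\gamma T}+\frac8T\sum_{t}A_t$, I would substitute $\sum_{t=0}^{T-1}A_t\le\frac{17\gamma L\sigma^2T}{3n}$ from \Cref{theo:momentum}, which collapses the second summand to $\frac{136\gamma L\sigma^2}{3n}$ and gives $\mathcal R_{\text{uncompressed}}(T)\lesssim\frac{1}{\gamma T}+\frac{\gamma L\sigma^2}{n}$. For $\mathcal R_\epsilon=\frac{64L_\A^2+2L^2}{T}\sum_{t=0}^{T-1}\E\|\x_t-\hat{\x}_t\|^2$, I would first observe that for Momentum SGD $\A=\nabla F$ has Lipschitz constant $L_\A=L_F$, a fixed constant absorbed into $\lesssim$, and then plug in the \CR\ estimate $\sum_{t=0}^{T}\E\|\x_t-\hat{\x}_t\|^2\le\gamma^2\alpha^2\epsilon^2$ from \Cref{theo:compression} (the extra index $t=T$ only enlarges the sum, so this is harmless), yielding $\mathcal R_\epsilon\lesssim\frac{L^2\gamma^2\alpha^2\epsilon^2}{T}$. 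At this stage the whole bound reads $\frac1T\sum_t\E\|\nabla f(\x_t)\|^2\lesssim\frac{1}{\gamma T}+\frac{\gamma L\sigma^2}{n}+\frac{L^2\gamma^2\alpha^2\epsilon^2}{T}$.

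The final step optimizes over the stepsize. The key device is that for $\gamma=\min\{a,b,c\}$ one has $1/\gamma\le 1/a+1/b+1/c$, so $\frac{1}{\gamma T}$ breaks into one piece per cap: the cap $\gamma\le\frac{\alpha}{12L}$ produces the smoothness-limited $\frac{1}{\alpha T}$ term, the cap $\gamma\le\sqrt{n/(T\sigma^2)}$ produces a $\frac{\sigma}{\sqrt{nT}}$ piece, and the cap $\gamma\le(\epsilon^2T)^{-1/3}$ produces an $O(T^{-2/3})$ compression piece. Independently, the same cap $\gamma\le\sqrt{n/(T\sigma^2)}$ turns $\frac{\gamma L\sigma^2}{n}$ into $\lesssim\frac{\sigma}{\sqrt{nT}}$, and $\gamma\le(\epsilon^2T)^{-1/3}$ controls the $\mathcal R_\epsilon$ block by a term of even higher order in $T$. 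Collecting everything and using $\alpha\le1$ to discard the strictly dominated pieces leaves the three advertised terms: the statistical $\frac{\sigma}{\sqrt{nT}}$, the compression $\frac{\alpha^2}{(\epsilon T)^{2/3}}$, and the smoothness-limited $\frac{1}{\alpha T}$.

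The computation is essentially mechanical; the part that needs care is the bookkeeping in the $\min$-substitution—matching each reciprocal and each capped term to the correct advertised term and verifying the precise $\alpha$ and $\epsilon$ exponents in the compression term while confirming the residual cross-terms are genuinely lower order. The conceptually important point I would emphasize is that every compression-induced contribution is $O(T^{-2/3})$, hence asymptotically dominated by the statistical term $\frac{\sigma}{\sqrt{nT}}=O(T^{-1/2})$; this is exactly what lets one conclude that \CR\ recovers the uncompressed asymptotic rate, and it hinges entirely on the favorable $\gamma^2\alpha^2\epsilon^2$ scaling of $\sum_t\E\|\x_t-\hat{\x}_t\|^2$ from \Cref{theo:compression} rather than the $\gamma^2T^2\epsilon^2/\alpha^2$ or $\gamma^2\epsilon^2/\alpha^2$ scaling of the no-compensation and \single\ variants.
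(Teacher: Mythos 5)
Your proposal is correct and follows essentially the same route as the paper: invoke Theorem~\ref{theo:general_nonconvex}, substitute the $A_t$ bound from Theorem~\ref{theo:momentum} and the \CR\ case of Theorem~\ref{theo:compression}, and then split $1/\gamma$ over the three caps in the $\min$. The only difference is bookkeeping: your $\mathcal{R}_\epsilon\lesssim\gamma^2\alpha^2\epsilon^2/T$ is what the cited theorems actually yield (the paper's displayed intermediate bound carries a spurious extra factor of $T^2$), and the residual mismatch between the resulting $\epsilon^{2/3}T^{-2/3}$ piece and the advertised $\alpha^2(\epsilon T)^{-2/3}$ term is inherited from the corollary statement itself, which the paper likewise dismisses as ``easily verified.''
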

The leading term admits the order of $\mathcal{O}\left( 1/\sqrt{nT} \right)$, which is the same as uncompressed training.
\subsubsection{STOchastic Recursive
Momentum (STORM) }
\begin{theorem}\label{theo:storm}
Setting $\A(\x_t;\xi_t) = \frac{1}{\alpha_t}\left(\nabla F(\x_t;\xi_t) - (1-\alpha_t)\nabla F(\x_{t-1};\xi_t) \right)$, if $\gamma\leq \frac{1}{4L}$ and $\alpha \geq \frac{8(L^2+L_F^2)\gamma^2}{n}$, under Assumptions~\ref{ass:main} and ~\ref{ass:error_magnitude}, we have
\begin{align*}
\sum_{t=0}^{T-1}A_t\leq \frac{2\alpha\sigma^2 T}{n}  + \frac{\sigma^2}{n\alpha B_0}.
\end{align*}
\end{theorem}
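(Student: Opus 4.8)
The plan is to bound $\sum_t A_t$ by controlling the estimator error $\bm{r}_t := \hu_t - \nabla f(\hx_t)$ on the auxiliary sequence, since $\E\|\bm{r}_t\|^2$ is exactly the first term of $A_t$. Substituting the STORM choice $\A(\hx_t;\xi_t) = \frac{1}{\alpha}\bigl(\nabla F(\hx_t;\xi_t) - (1-\alpha)\nabla F(\hx_{t-1};\xi_t)\bigr)$ into the definition \eqref{def:u_y} cancels the $\frac1\alpha$ and $\alpha$ factors, giving $\hu_t = (1-\alpha)\hu_{t-1} + \nabla F(\hx_t;\xi_t) - (1-\alpha)\nabla F(\hx_{t-1};\xi_t)$. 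Writing $\hu_{t-1} = \bm{r}_{t-1} + \nabla f(\hx_{t-1})$ and regrouping yields the STORM error recursion
\begin{align*}
\bm{r}_t = (1-\alpha)\bm{r}_{t-1} + \alpha\bigl(\nabla F(\hx_t;\xi_t) - \nabla f(\hx_t)\bigr) + (1-\alpha)\bigl(\bm{d}_t - \E[\bm{d}_t]\bigr),
\end{align*}
where $\bm{d}_t := \nabla F(\hx_t;\xi_t) - \nabla F(\hx_{t-1};\xi_t)$ and $\E[\cdot]$ denotes conditional expectation given the history.

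First I would establish the one-step variance bound. Since a fresh sample $\xi_t$ is drawn at step $t$, both the $\alpha(\cdots)$ and $(1-\alpha)(\cdots)$ terms are conditionally mean-zero, so the cross term with the history-measurable part $(1-\alpha)\bm{r}_{t-1}$ vanishes and, after expanding the square with the averaging gain $1/n$ and $L_F$-smoothness of $F(\cdot;\xi)$ from Assumption~\ref{ass:main},
\begin{align*}
\E\|\bm{r}_t\|^2 \le (1-\alpha)^2\E\|\bm{r}_{t-1}\|^2 + \frac{2\alpha^2\sigma^2}{n} + \frac{2(1-\alpha)^2 L_F^2}{n}\,\E\|\hx_t - \hx_{t-1}\|^2.
\end{align*}
Along the clean (uncompressed) part of the auxiliary update one has $\hx_t - \hx_{t-1} = -\gamma\hu_t$, so the displacement term is $\gamma^2\E\|\hu_t\|^2$, which is the quantity subtracted inside $A_t$.

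Next I would sum the recursion. Using $(1-\alpha)^2 \le 1-\alpha$ for $\alpha\in(0,1]$ makes it a contraction that telescopes: rearranging to $\alpha\E\|\bm{r}_{t-1}\|^2 \le \E\|\bm{r}_{t-1}\|^2 - \E\|\bm{r}_t\|^2 + (\text{noise})_t$ and summing gives
\begin{align*}
\sum_{t=0}^{T-1}\E\|\bm{r}_t\|^2 \le \frac{1}{\alpha}\E\|\bm{r}_0\|^2 + \frac{2\alpha\sigma^2 T}{n} + \frac{2L_F^2\gamma^2}{n\alpha}\sum_{t=0}^{T-1}\E\|\hu_t\|^2.
\end{align*}
The initialization is controlled by the large initial batch: $\bm{r}_0$ is a $B_0$-sample average error, so $\E\|\bm{r}_0\|^2 \le \sigma^2/(nB_0)$, producing the claimed $\sigma^2/(n\alpha B_0)$. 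Feeding this into $\sum_t A_t = \sum_t\E\|\bm{r}_t\|^2 - (1-2L\gamma)\sum_t\E\|\hu_t\|^2 - \tfrac14\sum_t\E\|\nabla f(\hx_t)\|^2$, the two hypotheses are calibrated exactly for the absorption: $\alpha \ge 8(L^2+L_F^2)\gamma^2/n$ forces the displacement coefficient $\frac{2L_F^2\gamma^2}{n\alpha} \le \tfrac14$, while $\gamma\le 1/(4L)$ gives $1-2L\gamma \ge \tfrac12 \ge \tfrac14$; hence the $\E\|\hu_t\|^2$ contributions cancel against the negative term, the $-\tfrac14\E\|\nabla f\|^2$ term is dropped, and only $\frac{2\alpha\sigma^2 T}{n} + \frac{\sigma^2}{n\alpha B_0}$ remains.

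The main obstacle is the bookkeeping around the displacement term $\E\|\hx_t - \hx_{t-1}\|^2$. The recursion naturally produces the squared movement of the auxiliary iterates, and to close the bound without leaving a compression-dependent residue one must argue that this movement is governed by the clean estimator $\hu_t$ (whose square is precisely what $A_t$ subtracts) rather than by the compressed $\u_t$; the compression part of the displacement is exactly what Theorem~\ref{theo:general_nonconvex} isolates into $\mathcal{R}_\epsilon$ through $\|\x_t - \hx_t\|^2$. Making this separation rigorous, together with verifying the conditional independence (a fresh $\xi_t$ independent of $\hx_t,\hx_{t-1}$ at the correct index) that annihilates the cross term, is the delicate point, and the joint appearance of $L^2$ and $L_F^2$ in the step-size condition reflects this interplay between the smoothness of $f$ and of $F(\cdot;\xi)$.
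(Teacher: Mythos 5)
Your proposal follows essentially the same route as the paper's proof: the STORM error recursion for $\hu_t-\nabla f(\hx_t)$, the conditional mean-zero argument that kills the cross term and yields the $1/n$ variance gain, the contraction-lemma summation, the $\sigma^2/(nB_0)$ initial-batch bound, and the absorption of the $\E\|\hu_t\|^2$ terms via the two hypotheses are all identical (your variance bound on $\bm{d}_t-\E[\bm{d}_t]$ even gives a slightly tighter constant, $2L_F^2$ versus the paper's $4(L^2+L_F^2)$). The one wrinkle you flag at the end --- that the auxiliary displacement is really $\hx_t-\hx_{t-1}=-\gamma\u_t$ rather than $-\gamma\hu_t$, so a compression-dependent piece of $\E\|\x_t-\hx_t\|^2$ must be routed into $\mathcal{R}_\epsilon$ --- is glossed over in exactly the same way by the paper's own proof, which silently works with the unhatted sequence, so you are being more honest about the same gap rather than introducing a new one.
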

This leads us to the corollary below:
\begin{corollary}\label{coro:storm}
For {\CR}, under Assumptions~\ref{ass:main} and~\ref{ass:error_magnitude}, setting $\gamma =\min\left\{\frac{1}{4L},\left(\frac{n^2}{\sigma^2 T}\right)^{\frac{1}{3}},\left(\frac{n}{\epsilon^2 T}\right)^{\frac{1}{7}}  \right\}$, $\alpha_t = \frac{8L^2\gamma^2}{n}$, $B_0 = \frac{\sigma^{\frac{8}{3}} T^{\frac{1}{3}}}{n^{\frac{2}{3}}}$, and $\A(\x_t;\xi_t) = \frac{1}{\alpha_t}\left(\nabla F(\x_t;\xi_t) - (1-\alpha_t)\nabla F(\x_{t-1};\xi_t) \right)$,  we have
\begin{align*}
\frac{1}{T}\sum_{t=0}^{T-1}\mathbb{E}\|\nabla f(\x_t)\|^2
 \lesssim  \left( \frac{\sigma}{nT} \right)^{\frac{2}{3}} + \left( \frac{\epsilon^2}{nT^6} \right)^{\frac{1}{7}} + \frac{1}{T}.
\end{align*}
\end{corollary}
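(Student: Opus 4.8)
The plan is to read this corollary directly off the master decomposition of Theorem~\ref{theo:general_nonconvex}, substituting the two STORM-specific ingredient bounds and then optimizing the free parameters. First I would instantiate
\begin{align*}
\frac{1}{T}\sum_{t=0}^{T-1}\E\|\nabla f(\x_t)\|^2 \leq \frac{16(f(\x_0)-f^*)}{\gamma T} + \frac{8}{T}\sum_{t=0}^{T-1}A_t + \frac{64L_\A^2+2L^2}{T}\sum_{t=0}^{T-1}\E\|\x_t-\hat{\x}_t\|^2,
\end{align*}
bounding $\sum_{t=0}^{T-1}A_t$ by Theorem~\ref{theo:storm} and the displacement sum $\sum_{t}\E\|\x_t-\hat{\x}_t\|^2$ by the ErrorCompensatedX estimate $\gamma^2\alpha^2\epsilon^2$ of Theorem~\ref{theo:compression}. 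After dividing through and absorbing $f(\x_0)-f^*$ and the Lipschitz constants $L,L_F,L_\A$ into $\lesssim$, this yields the raw estimate
\begin{align*}
\frac{1}{T}\sum_{t=0}^{T-1}\E\|\nabla f(\x_t)\|^2 \lesssim \frac{1}{\gamma T} + \frac{\alpha\sigma^2}{n} + \frac{\sigma^2}{n\alpha B_0 T} + \frac{\gamma^2\alpha^2\epsilon^2}{T},
\end{align*}
whose four terms are the descent term, the steady-state variance of the recursive estimator, the warm-up cost of the initial mini-batch $B_0$, and the residual compression error.

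Before optimizing I would confirm the hypotheses of Theorem~\ref{theo:storm}: $\gamma\leq\frac{1}{4L}$ holds because $\frac{1}{4L}$ is one of the arguments of the minimum defining $\gamma$, and $\alpha\geq\frac{8(L^2+L_F^2)\gamma^2}{n}$ holds by the choice $\alpha\propto\gamma^2/n$ (reading the prefactor as $L^2+L_F^2$ rather than $L^2$, which only alters constants). Substituting $\alpha=\frac{8L^2\gamma^2}{n}$ gives $\alpha\propto\gamma^2/n$ and $\gamma^2\alpha^2\propto\gamma^6/n^2$, so the raw estimate becomes a bound in $\gamma,n,\sigma,\epsilon,T$ and $B_0$ only.

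With these substitutions the descent term $\frac{1}{\gamma T}$, evaluated at the three branches of $\gamma=\min\{\frac{1}{4L},(\frac{n^2}{\sigma^2 T})^{1/3},(\frac{n}{\epsilon^2 T})^{1/7}\}$, reproduces exactly the three advertised terms $\frac{1}{T}$, $(\frac{\sigma}{nT})^{2/3}$, and $(\frac{\epsilon^2}{nT^6})^{1/7}$; the remaining work is to show each of the other three terms is dominated. The variance term $\frac{\alpha\sigma^2}{n}\propto\frac{\gamma^2\sigma^2}{n^2}$ falls below $(\frac{\sigma}{nT})^{2/3}$ by the cap $\gamma\leq(\frac{n^2}{\sigma^2 T})^{1/3}$; the warm-up term $\frac{\sigma^2}{n\alpha B_0 T}\propto\frac{\sigma^2}{\gamma^2 B_0 T}$ is exactly where the prescribed $B_0=\frac{\sigma^{8/3}T^{1/3}}{n^{2/3}}$ enters, since $B_0$ is tuned so that this term matches the leading rate $(\frac{\sigma}{nT})^{2/3}$ precisely at the interior optimum $\gamma=(\frac{n^2}{\sigma^2 T})^{1/3}$, which is the branch that binds in the asymptotic regime; and the compression term $\frac{\gamma^2\alpha^2\epsilon^2}{T}\propto\frac{\gamma^6\epsilon^2}{n^2 T}$ stays below $(\frac{\epsilon^2}{nT^6})^{1/7}$ thanks to the cap $\gamma\leq(\frac{n}{\epsilon^2 T})^{1/7}$.

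I expect the main obstacle to be this last verification rather than any new inequality: it is a three-way case analysis in which each auxiliary term must be checked against the correct dominating term across all branches of the minimum, and the exponents align only for the specific choices $\alpha\propto\gamma^2/n$ and $B_0=\frac{\sigma^{8/3}T^{1/3}}{n^{2/3}}$. The point I would watch most closely is the interaction of STORM's estimator scale with the compression bound: because $\A(\x_t;\xi_t)$ carries the prefactor $\frac{1}{\alpha_t}$, its Lipschitz constant $L_\A$ is of order $L_F/\alpha$, and the clean matching with $(\frac{\epsilon^2}{nT^6})^{1/7}$ relies on keeping $L_\A,L,L_F$ as genuine problem constants absorbed into $\lesssim$ so that no hidden $\alpha$-dependence leaks into the $\epsilon$-term. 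Everything outside this bookkeeping is a direct substitution into the already-established Theorems~\ref{theo:general_nonconvex}, \ref{theo:compression}, and~\ref{theo:storm}.
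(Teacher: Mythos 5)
Your proposal is correct and follows essentially the same route as the paper: instantiate Theorem~\ref{theo:general_nonconvex}, bound $\sum_t A_t$ via Theorem~\ref{theo:storm} and $\sum_t\E\|\x_t-\hat{\x}_t\|^2$ via the \CR{} case of Theorem~\ref{theo:compression}, substitute $\alpha=8L^2\gamma^2/n$, and check that each auxiliary term is dominated under the three-way minimum defining $\gamma$. Your flagged caveats (the $L^2$ vs.\ $L^2+L_F^2$ prefactor in the step-size condition, and treating $L_\A$ as an $\alpha$-independent constant) are present in the paper's own argument as well, which simply asserts $L_\A=2L$ and states the verification is routine.
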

The leading term admits the order of $\mathcal{O}\left(1/{(nT)^\frac{2}{3}} \right)$, which is the same as uncompressed training \citep{storm}.

\subsubsection{Implicit Gradient Transport (IGT)}
For IGT, we need to make some extra assumptions (as listed below) for theoretical analysis.
\begin{assumption}\label{ass:igt}
%\begin{itemize}
%\item 
The Hessian $\nabla^2 f(\cdot)$ is $\rho$-Lipschitz continuous, i.e., for every $\x,~\bm{y} \in \mathbb{R}^d$,
\begin{align*}
\left\| \nabla^2 f(\x) - \nabla^2 f(\bm{y}) \right\| \leq \rho \Vert \x - \bm{y} \Vert,\quad \forall\x,\bm{y}.
\end{align*}
%\item 
The magnitude of the stochastic gradient is upper bounded, i.e.,
\begin{align*}
\E\|\nabla F(\bm{x};\xi) - \nabla F(\bm{y};\xi)\|^2\leq \Delta^2,\quad \forall \bm{x}, \bm y.
\end{align*}
%\end{itemize}
 
\end{assumption}
Now we are ready to present the result of IGT:
\begin{theorem}\label{theo:igt}
Setting $\A(\x_t;\xi_t) = \nabla F\left(\x_t + \frac{1-\alpha_t}{\alpha_t}(\x_t - \x_{t-1});\xi_t \right)$, if $\gamma\leq \frac{1}{2L}$, under Assumptions~\ref{ass:main},~\ref{ass:error_magnitude} and~\ref{ass:igt}, we have
\begin{align*}
\sum_{t=0}^{T-1} A_t \leq \frac{\alpha\sigma^2 T}{n}  + \frac{\sigma^2}{\alpha nB_{0}} +\frac{\rho^2\gamma^4\Delta^4T}{\alpha^4}.
\end{align*}
\end{theorem}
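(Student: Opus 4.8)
The plan is to bound $\sum_{t=0}^{T-1}A_t$ by comparing the stochastic IGT estimator $\hu_t$ against an idealized full-gradient version run along the \emph{same} auxiliary trajectory $\{\hx_t\}$. Because $\gamma\le \frac{1}{2L}$ forces $1-2L\gamma\ge 0$, both negative terms $-(1-2L\gamma)\E\|\hu_t\|^2$ and $-\frac14\E\|\nabla f(\hx_t)\|^2$ in $A_t$ are nonpositive; for the leading-order bound I would simply drop them (they can be retained to sharpen absolute constants), so it suffices to bound $\sum_{t=0}^{T-1}\E\|\nabla f(\hx_t)-\hu_t\|^2$. To that end I introduce $\overline{\u}_t=(1-\alpha)\overline{\u}_{t-1}+\alpha\nabla f(\hat\y_t)$ with $\hat\y_t=\hx_t+\frac{1-\alpha}{\alpha}(\hx_t-\hx_{t-1})$ and $\overline{\u}_0=\nabla f(\hx_0)$; this is exactly the recursion \eqref{def:u_y} for $\hu_t$ with the stochastic extrapolated gradient $\A(\hx_t;\xi_t)=\nabla F(\hat\y_t;\xi_t)$ replaced by its mean. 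Setting $\bm p_t:=\overline{\u}_t-\nabla f(\hx_t)$ (the transport \emph{bias}) and $\bm q_t:=\hu_t-\overline{\u}_t$ (the accumulated \emph{noise}), Young's inequality gives $\E\|\nabla f(\hx_t)-\hu_t\|^2\le 2\E\|\bm p_t\|^2+2\E\|\bm q_t\|^2$, and the two pieces are treated separately.

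For the noise, subtracting the two recursions gives $\bm q_t=(1-\alpha)\bm q_{t-1}+\alpha\bm\zeta_t$ with $\bm\zeta_t=\frac1n\sum_{i=1}^n\big(\nabla F(\hat\y_t;\xi_t^{(i)})-\nabla f(\hat\y_t)\big)$, which is conditionally zero-mean and, by the bounded-variance part of Assumption~\ref{ass:main} together with independence across workers, satisfies $\E\|\bm\zeta_t\|^2\le \sigma^2/n$. Taking conditional expectations removes the cross term, so $\E\|\bm q_t\|^2\le (1-\alpha)^2\E\|\bm q_{t-1}\|^2+\alpha^2\sigma^2/n$; unrolling from $\E\|\bm q_0\|^2\le \sigma^2/(nB_0)$ (the size-$B_0$ initial batch) and summing the resulting geometric series over $t$ yields the first two target terms, $\frac{\alpha\sigma^2 T}{n}+\frac{\sigma^2}{\alpha n B_0}$, up to absolute constants.

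The crux is the bias term, which is where Assumption~\ref{ass:igt} enters. Define the per-step transport error $\bm r_t:=(1-\alpha)\nabla f(\hx_{t-1})+\alpha\nabla f(\hat\y_t)-\nabla f(\hx_t)$; a direct manipulation of the two recursions shows $\bm p_t=(1-\alpha)\bm p_{t-1}+\bm r_t$, and since $\bm p_0=0$ this unrolls to $\bm p_t=\sum_{s=1}^t(1-\alpha)^{t-s}\bm r_s$. I would then Taylor-expand $\nabla f$ to second order around $\hx_t$ at both $\hat\y_t$ and $\hx_{t-1}$: the choice of extrapolation coefficient $\frac{1-\alpha}{\alpha}$ makes the two Hessian-times-displacement terms cancel \emph{exactly}, so that $\bm r_t$ reduces to a combination of third-order remainders, and the $\rho$-Lipschitz Hessian bound gives $\|\bm r_t\|\le \frac{\rho}{2\alpha}\|\hx_t-\hx_{t-1}\|^2$. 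Because $\hx_t-\hx_{t-1}=-\gamma\u_t$ and $\u_t$ is a moving average of stochastic gradients whose magnitude is bounded by $\Delta$ (Assumption~\ref{ass:igt}), we have $\|\hx_t-\hx_{t-1}\|\le\gamma\Delta$, hence $\|\bm r_t\|\le \frac{\rho\gamma^2\Delta^2}{2\alpha}$; summing the geometric weights gives $\|\bm p_t\|\le \frac{\rho\gamma^2\Delta^2}{2\alpha^2}$, so that $\sum_{t=0}^{T-1}\E\|\bm p_t\|^2\lesssim \frac{\rho^2\gamma^4\Delta^4 T}{\alpha^4}$, the third target term. Adding the two contributions completes the proof.

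The step I expect to be the main obstacle is verifying the exact cancellation in $\bm r_t$: one must check that $\frac{1-\alpha}{\alpha}$ is precisely the coefficient that annihilates the first-order (Hessian-weighted) displacement, so that $\bm r_t$ is genuinely second-order-small in $\|\hx_t-\hx_{t-1}\|$ rather than first-order---this is exactly the ``implicit gradient transport'' mechanism, and is what produces the favorable $\rho^2$ (rather than $L^2$) dependence. A secondary difficulty is extracting a clean uniform bound on $\|\u_t\|$ from Assumption~\ref{ass:igt} and carefully tracking the $1/\alpha$ factors, since a single stray power of $\alpha$ would change the final $\alpha^{-4}$ scaling of the bias.
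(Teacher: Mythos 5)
Your proposal is correct and follows essentially the same route as the paper's proof: the heart of both arguments is the observation that $(1-\alpha)\nabla f(\hx_{t-1})+\alpha\nabla f(\hat{\bm y}_t)-\nabla f(\hx_t)$ collapses to a sum of second-order Taylor remainders because the extrapolation coefficient $\frac{1-\alpha}{\alpha}$ cancels the Hessian terms exactly, followed by the $\rho$-Lipschitz-Hessian bound, the step-size bound $\|\hx_t-\hx_{t-1}\|\le\gamma\Delta$, and a geometric-contraction summation (Lemma~\ref{lemma:contraction}). The only difference is bookkeeping: the paper runs a single recursion on $\E\|\hu_t-\nabla f(\hx_t)\|^2$, absorbing the zero-mean noise via conditional orthogonality at each step, whereas you split the error into separate bias and noise sequences first; both yield the stated bound up to the same absolute constants.
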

This leads us to the corollary below:
\begin{corollary}\label{coro:igt}
For {\CR}, under Assumptions~\ref{ass:main}, ~\ref{ass:error_magnitude} and ~\ref{ass:igt}, by setting $\gamma =\min\left\{\frac{1}{2L},\left(\frac{n^4}{\sigma^8 T^5}\right)^{\frac{1}{9}},\left(\frac{n}{\epsilon^2 T}\right)^{\frac{1}{7}}  \right\} $, $\alpha = \left(\frac{n^5}{\sigma^8 T^4}\right)^{\frac{1}{9}}$ and $B_0 = 1$, we have
\begin{align*}
\frac{1}{T}\sum_{t=0}^{T-1}\mathbb{E}\|\nabla f(\x_t)\|^2
\lesssim  \left( \frac{\sigma^8}{n^4T^4} \right)^{\frac{1}{9}} + \left( \frac{\epsilon^2}{nT^6} \right)^{\frac{1}{7}} + \frac{1}{T}.
\end{align*}
\end{corollary}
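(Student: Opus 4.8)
The plan is to assemble the three results already proved and then optimize the free parameters $(\gamma,\alpha,B_0)$. First I would start from the master bound of Theorem~\ref{theo:general_nonconvex}, which splits $\frac1T\sum_t\E\|\nabla f(\x_t)\|^2$ into $\mathcal{R}_{\text{uncompressed}}(T)+\mathcal{R}_\epsilon$. Into $\mathcal{R}_{\text{uncompressed}}(T)$ I substitute the IGT estimate $\sum_{t=0}^{T-1}A_t\leq\frac{\alpha\sigma^2 T}{n}+\frac{\sigma^2}{\alpha nB_0}+\frac{\rho^2\gamma^4\Delta^4 T}{\alpha^4}$ from Theorem~\ref{theo:igt}, and into $\mathcal{R}_\epsilon$ I substitute the \CR~bound $\sum_{t=0}^{T-1}\E\|\x_t-\hat{\x}_t\|^2\leq\gamma^2\alpha^2\epsilon^2$ from Theorem~\ref{theo:compression}. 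Dividing by $T$ and collecting terms (keeping $\rho,\Delta,L,L_\A$ explicit) produces a bound of the schematic shape
\begin{align*}
\frac{1}{T}\sum_{t=0}^{T-1}\E\|\nabla f(\x_t)\|^2 \lesssim \frac{1}{\gamma T}+\frac{\alpha\sigma^2}{n}+\frac{\sigma^2}{\alpha nB_0 T}+\frac{\rho^2\Delta^4\gamma^4}{\alpha^4}+\frac{(L_\A^2+L^2)\gamma^2\alpha^2\epsilon^2}{T}.
\end{align*}

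The decisive step is to read the first term through the three-way minimum defining $\gamma$: since $\tfrac1\gamma=\max\{2L,(\sigma^8T^5/n^4)^{1/9},(\epsilon^2T/n)^{1/7}\}$ and a maximum is bounded by the corresponding sum,
\begin{align*}
\frac{1}{\gamma T}\leq\frac{2L}{T}+\Big(\tfrac{\sigma^8}{n^4T^4}\Big)^{1/9}+\Big(\tfrac{\epsilon^2}{nT^6}\Big)^{1/7},
\end{align*}
which already reproduces the three advertised terms. It then remains to check that the four remaining terms sit below this sum. The choice $\alpha=(n^5/(\sigma^8T^4))^{1/9}$ is exactly the one making $\frac{\alpha\sigma^2}{n}$ match the leading term $(\sigma^8/(n^4T^4))^{1/9}$ in its $n$ and $T$ dependence; with $\gamma\leq(n^4/(\sigma^8T^5))^{1/9}$ one computes $\frac{\gamma^4}{\alpha^4}\leq(nT)^{-4/9}$, so the transport-bias term carries the same $n,T$ decay and is again controlled by the leading term; taking $B_0=1$ turns $\frac{\sigma^2}{\alpha nT}$ into a strictly higher power of $1/T$; and $\gamma\leq(n/(\epsilon^2T))^{1/7}$ together with $\alpha\leq1$ forces $\frac{\gamma^2\alpha^2\epsilon^2}{T}\leq\frac{\gamma^2\epsilon^2}{T}$ below $(\epsilon^2/(nT^6))^{1/7}$. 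Summing the surviving contributions gives the claimed rate.

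The main obstacle is the simultaneous three-way tuning: a single scalar $\gamma$, defined as a minimum of three quantities, must at once damp the optimization error $\frac{1}{\gamma T}$, keep $\frac{\gamma^4}{\alpha^4}$ small, and suppress $\frac{\gamma^2\epsilon^2}{T}$, while $\alpha$ must balance the increasing term $\frac{\alpha\sigma^2}{n}$ against the decreasing ones $\frac{\sigma^2}{\alpha nT}$ and $\frac{\gamma^4}{\alpha^4}$; checking that every cross-term lands below one of the three target terms, with the minimum attained consistently, is the real bookkeeping hurdle. A subtlety specific to IGT is that the oracle $\A(\x_t;\xi_t)=\nabla F(\x_t+\frac{1-\alpha}{\alpha}(\x_t-\x_{t-1});\xi_t)$ has effective Lipschitz constant $L_\A\sim1/\alpha$, so the factor $L_\A^2$ in $\mathcal{R}_\epsilon$ contributes a $1/\alpha^2$ that precisely cancels the $\alpha^2$ carried by $\sum\E\|\x_t-\hat{\x}_t\|^2\leq\gamma^2\alpha^2\epsilon^2$; missing this cancellation would inflate the compression term. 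The genuinely hard analytic input --- the IGT estimate of Theorem~\ref{theo:igt}, whose $\frac{\rho^2\gamma^4\Delta^4}{\alpha^4}$ transport-bias rests on the Hessian-Lipschitz Assumption~\ref{ass:igt} --- is taken as given here.
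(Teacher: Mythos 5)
Your proposal is correct and follows essentially the same route as the paper's own proof: combine Theorem~\ref{theo:general_nonconvex} with the IGT bound of Theorem~\ref{theo:igt} and the \CR~bound of Theorem~\ref{theo:compression}, substitute the stated $\gamma$, $\alpha$, $B_0$, and verify term by term that each residual contribution is dominated by one of the three target rates. If anything you are more careful than the paper, which simply asserts $L_{\A}=2L$ and that the final bound ``can be easily verified''; your observation that the IGT oracle's effective Lipschitz constant scales like $1/\alpha$ and cancels the $\alpha^2$ in $\sum_t\E\|\x_t-\hat{\x}_t\|^2\leq\gamma^2\alpha^2\epsilon^2$ addresses a point the paper glosses over.
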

The leading term admits the order of $\mathcal{O}(1/ (nT)^{\frac{4}{9}})$, which is even worse than that of Momentum SGD. A sharper rate of {IGT} for a general non-convex loss function is still an open problem.

\section{Numerical Experiments}\label{sec:exp}
%We evaluate {{\CR}} and existing approaches on ResNet-50. We will show that {{\CR}} converges as fast as the algorithm without compression, and it outperforms previous error compensation methods when $\alpha_t$ is small.
%\subsection{ResNet on CIFAR10}\label{resnet}

In this section, we train ResNet-50 \citep{7780459} on CIFAR10, which consists of 50000 training images and 10000 testing images, each has 10 labels. We run the experiments on eight workers, each having a 1080Ti GPU. The batch size on each worker is $16$ and the total batch size is $128$.
For each choice of $\A(\x_t;\xi_t)$, we evaluate four implementations: 1) \textbf{Original algorithm} without compression. 2) \textbf{No compensation}, which compresses the data directly without the previous information. 3) \textbf{Single compensation}, which compresses the gradient estimator with compression error from the last step added. 4) \textbf{{\CR}}.

We use the $1$-bit compression in \citet{tang2019doublesqueeze}, which leads to an  overall $96\%$ of communication volume reduction. We find that for both STORM and IGT, setting $\alpha_t = \frac{1}{1+ c_0 t}$, where $c_0$ is a constant and $t$ is the training step count, would make the training faster. We grid search the best learning rate from $\{0.5,0.1,0.001\}$ and $c_0$ from $\{0.1,0.05,0.001\}$, and find that  the best learning rate is $0.01$ with $c_0=0.05$ for both original STORM and IGT. So we use this configuration for the other three implementations.  For Momentum, usually we will not set $\alpha_t$ to be too small, therefore we only set $\alpha_t = 0.1$, and the best learning rate is $0.1$ after the same grid search. We set $\beta=0.3$ for the low-pass filter in all cases.

\begin{figure}[!ht]

\centering
\subfloat[Training loss (Momentum SGD)]{
\begin{minipage}[t]{0.31\linewidth}
\centering
\includegraphics[width=1\textwidth]{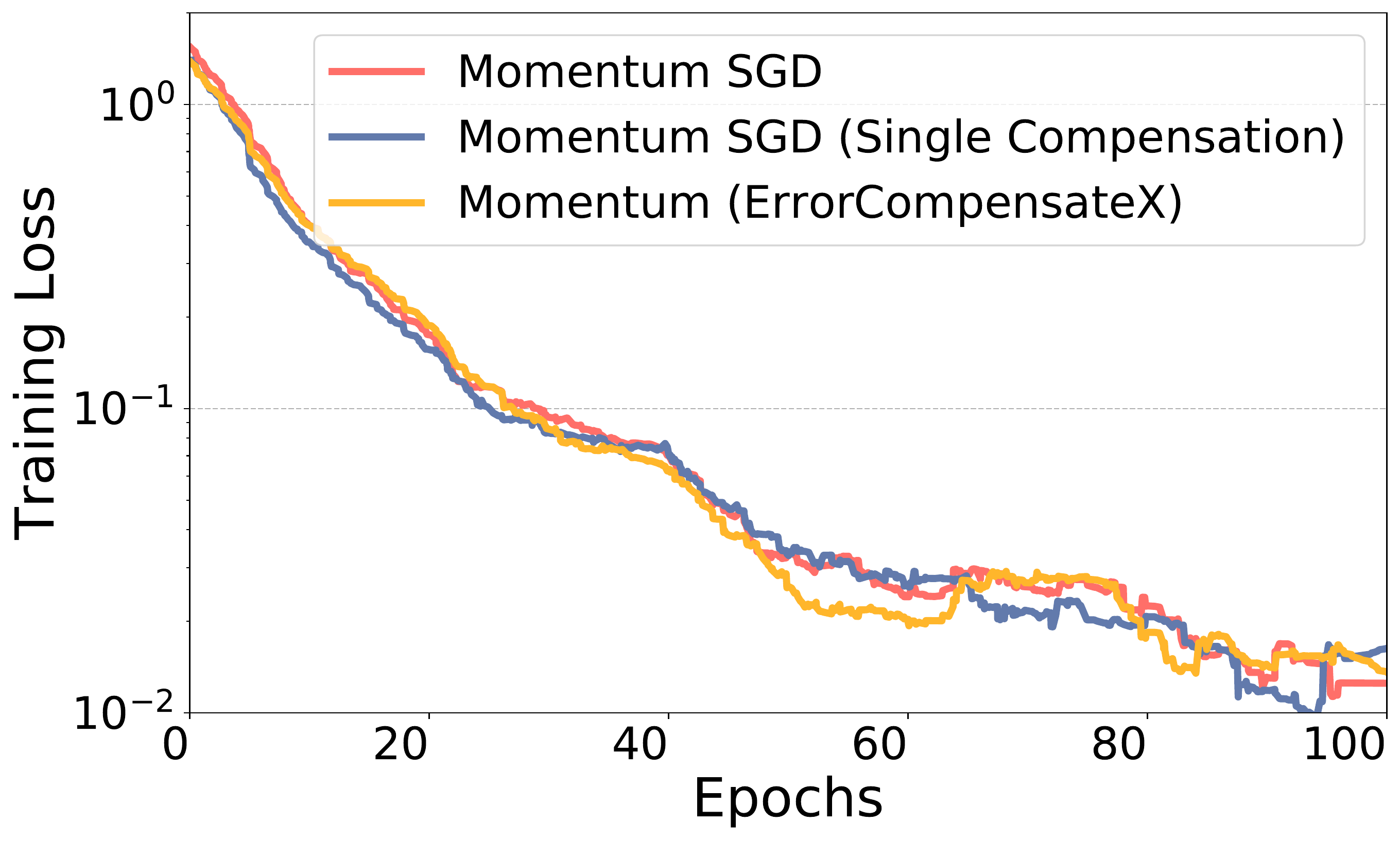}
\end{minipage}
}
\centering
\subfloat[Training loss (STORM)]{
\begin{minipage}[t]{0.31\linewidth}
\centering
\includegraphics[width=1\textwidth]{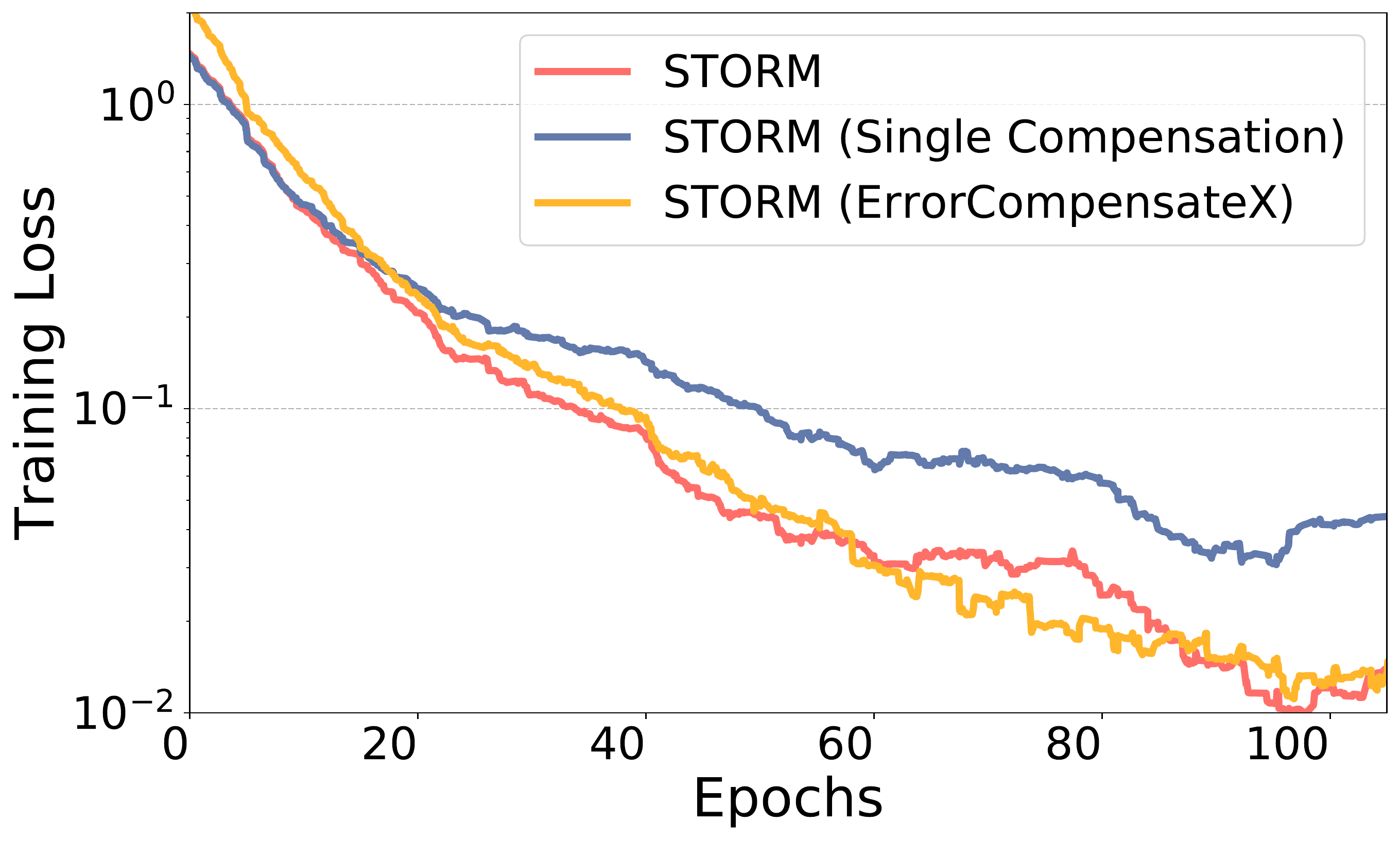}
\end{minipage}
}
\centering
\subfloat[Training loss (IGT)]{
\begin{minipage}[t]{0.31\linewidth}
\centering
\includegraphics[width=1\textwidth]{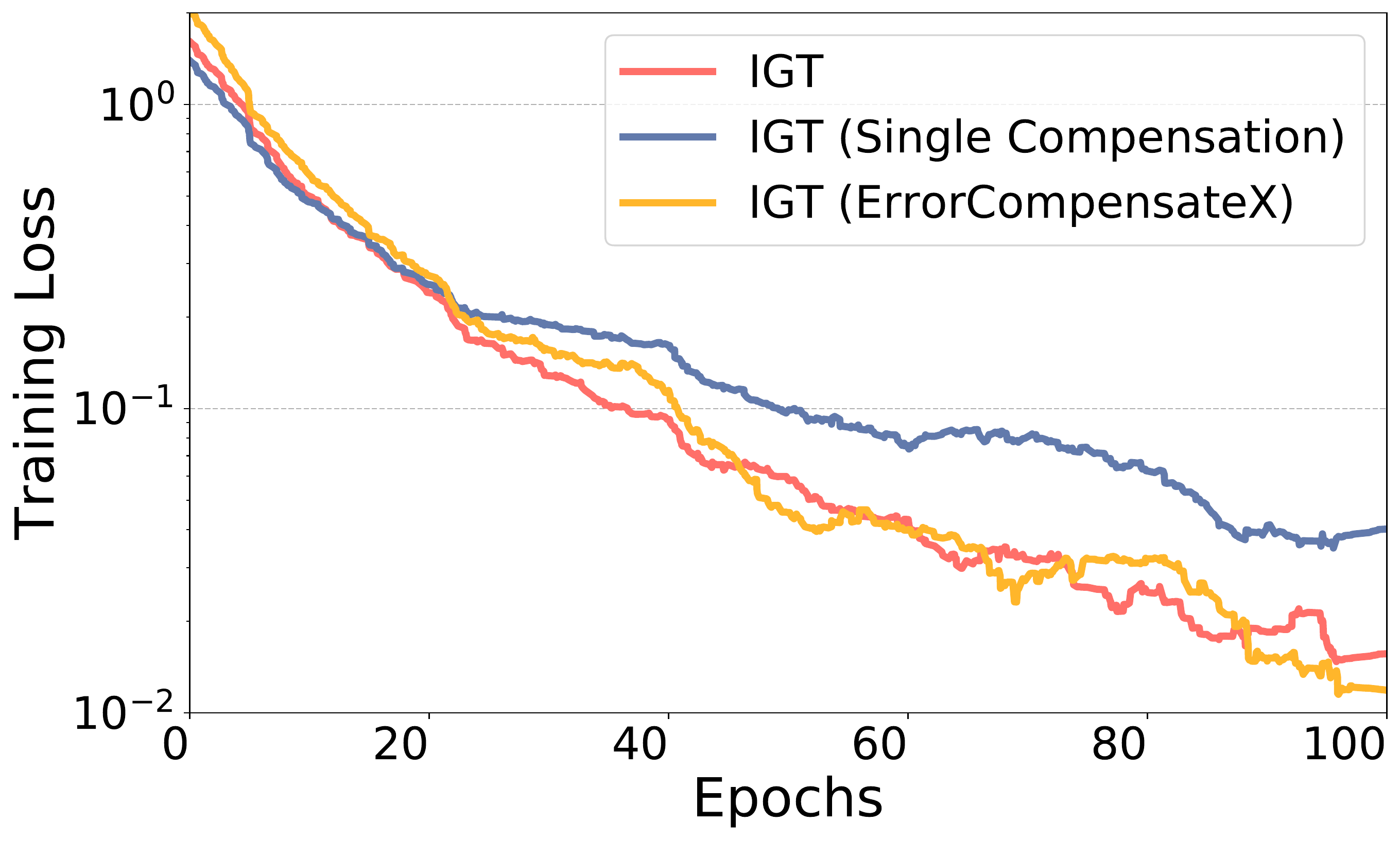}
\end{minipage}
}

\subfloat[Testing accuracy (Momentum SGD)]{
\begin{minipage}[t]{0.31\linewidth}
\centering
\includegraphics[width=1\textwidth]{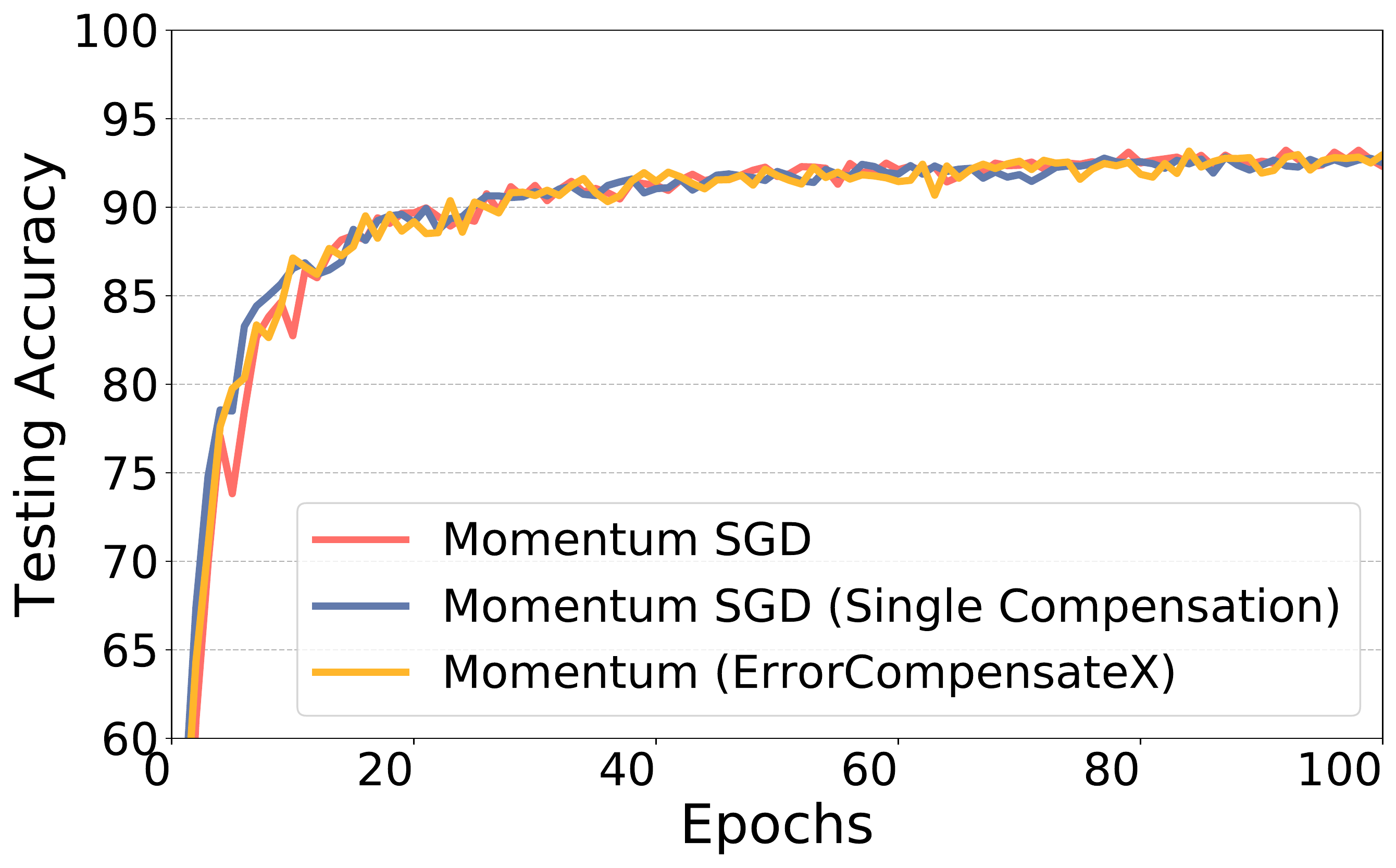}
\end{minipage}
}
\subfloat[Testing accuracy (STORM)]{
\begin{minipage}[t]{0.31\linewidth}
\centering
\includegraphics[width=1\textwidth]{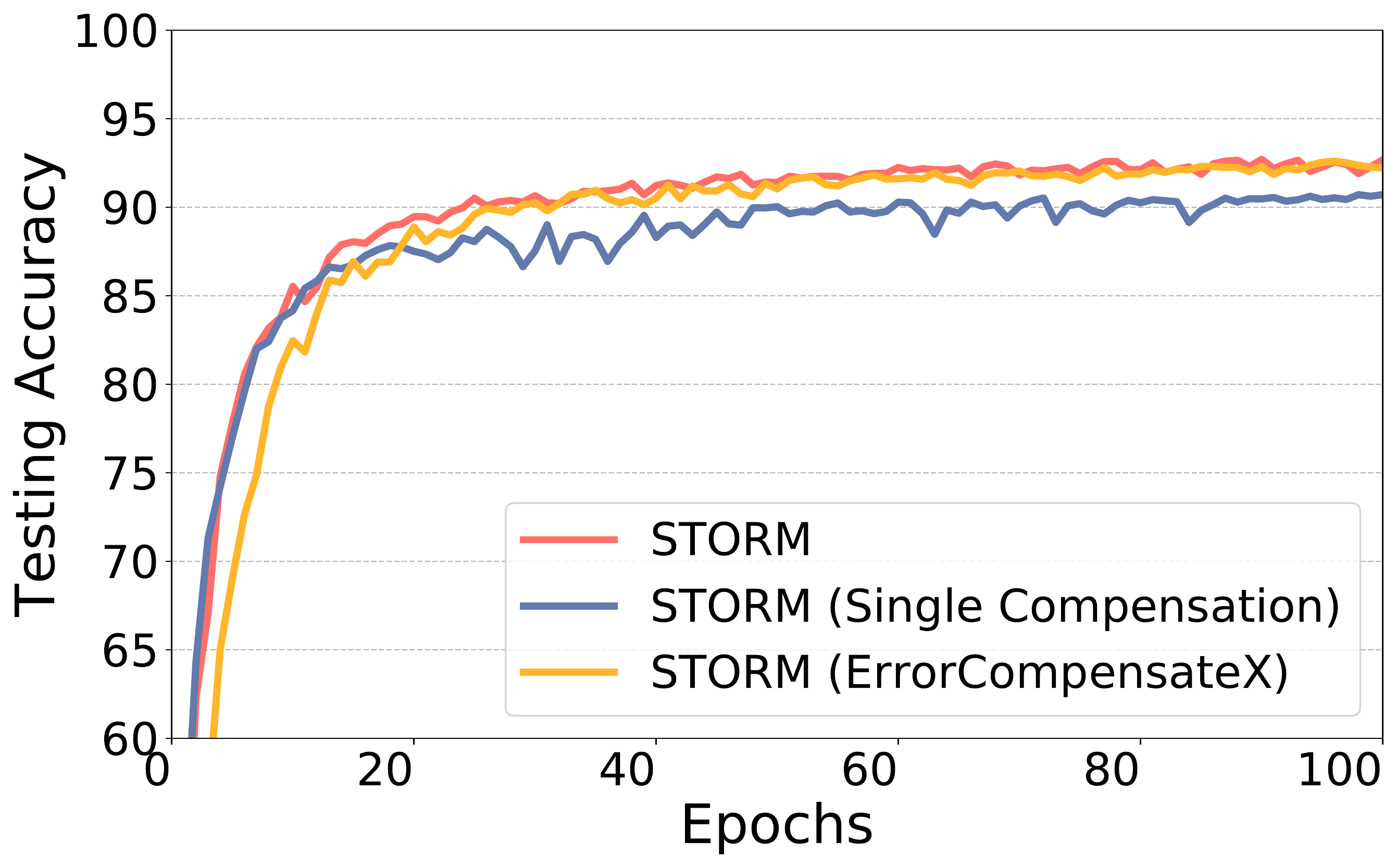}
\end{minipage}
}
\subfloat[Testing accuracy (IGT)]{
\begin{minipage}[t]{0.31\linewidth}
\centering
\includegraphics[width=1\textwidth]{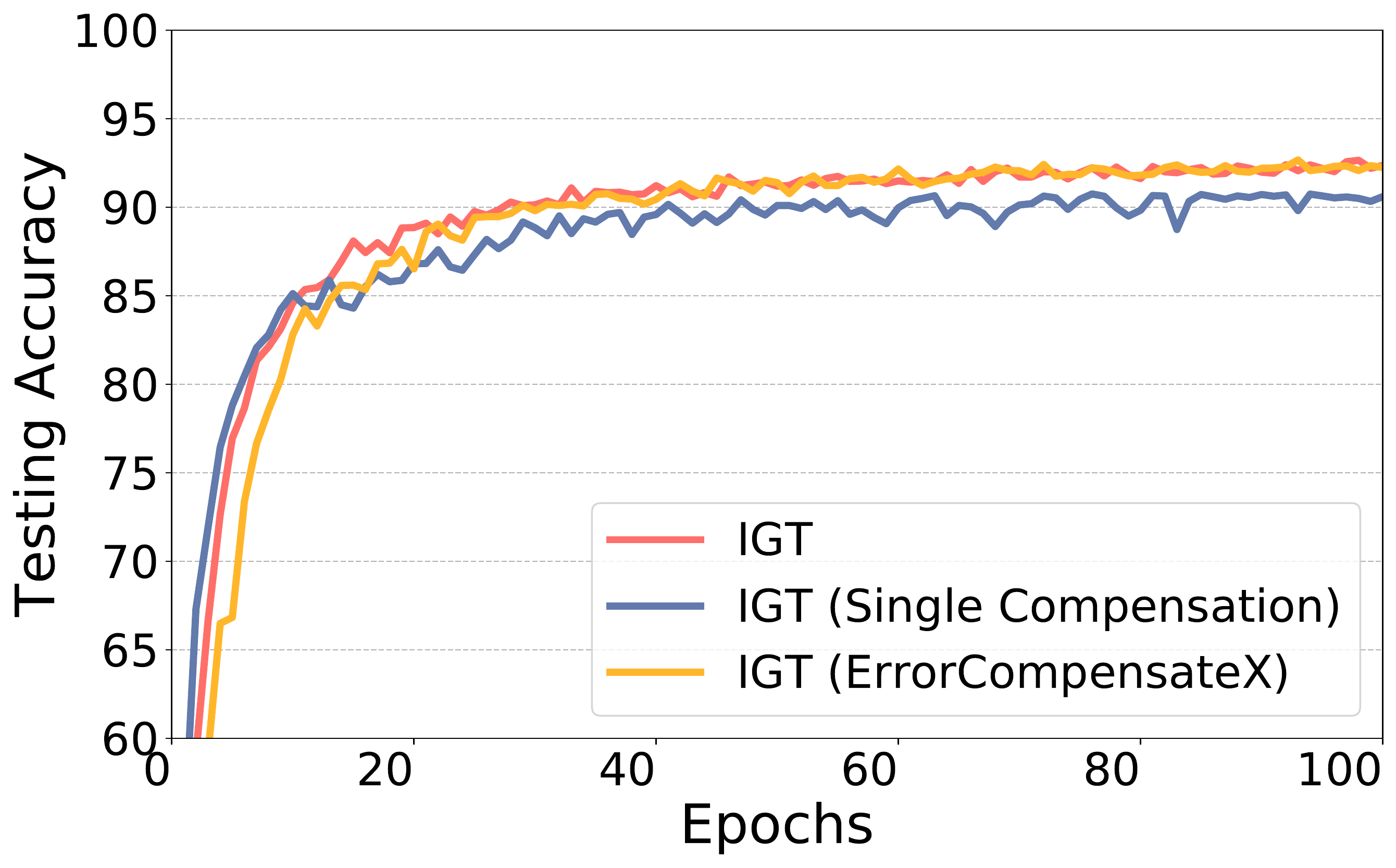}
\end{minipage}
}
\centering
\caption{Epoch-wise convergence comparison on ResNet-50 for Momenum SGD (left column), STORM (middle column), and IGT (right column) with different communication implementations. We do not include the result of no compensation because it diverges after a few steps of training.}\label{fig:igt}
\end{figure}

As illustrated in Figure~\ref{fig:igt}, for both STORM and IGT, \CR~achieves almost the same convergence speed with the original algorithm, while \single~fails to achieve this. We find that with increasing magnitude of $c_0$, this gap of convergence speed would increase, which further validate our claim that \CR~is more essential when $\alpha_t$ gets small. For Momentum SGD, all the three implementations (except compression without error compensation) achieves similar convergence speed. This is because $\alpha_t=0.1$ is comparably large.

\section{Conclusion and Remarks}
In this paper, we address an important problem for communication efficient distributed training: how to fully compensate the compression error for variance reduced optimization algorithms. In our paper, we consider a more general class of optimization algorithms (including SGD and Momentum SGD), and we propose a novel method: \CR, which utilize the compression error from the last two steps, in order to fully compensate the history compression error while previous method fails to do. From the theoretical perspective, we provide a unified theoretical analysis framework that gives an intuitive evaluation for the side-effect of the compression, and shows that {\CR} admits the same asymptotic convergence rate with each of the original algorithm. Numerical experiments are implemented to show  \CR's convergence and its better performance comparing to other implementations.

% The non-convergence comes from the bias in the stochastic gradient estimation, which does not vanish no matter how many iterations it takes. We proposed a convergent algorithms {\VR} and its Hessian-free variant {\HF} to remove the bias in the stochastic gradient achieve arbitrary $\epsilon$-approximate stationary points.
% More specifically, {\storm} admits the same asymptotic convergence rate $\O\left(1/\sqrt{T}\right)$ with \maml-unbiased, which requires a full gradient at each iteration. Numerical experiments are implemented to show  {\VR}'s convergence and its better performance comparing to {\maml}. In addition, the theoretical analysis framework is very general and may be applied to show the convergence of other meta-learning algorithms.

\bibliography{bibfile}
\bibliographystyle{apalike2}

\newpage

\onecolumn
{\huge Supplementary}

In the supplementary, we first prove the updating rule of \CR. Then we will present the detailed proof for Theorem~\ref{theo:general_nonconvex} followed by the detailed convergence rate result for each algorithm.
\section{Updating rule of \CR}
The updating rule of Algorithm~\ref{alg1} admits the form
\begin{align*}
\oe_t = & (1-\beta)\oe_{t-1} + \beta\left(\frac{\alpha_{t-1}}{\alpha_t}(2-\alpha_t)\odelta_{t-1} + \frac{\alpha_{t-2}}{\alpha_t} (1-\alpha_t)\odelta_{t-2}  \right),\\
\v_t =&  (1-\alpha_t)\v_{t-1} + \alpha_t\ob_t(\x_t) -\alpha_t\odelta_t +  \alpha_t\oe_t,\\
\x_{t+1} =& \x_t - \gamma\v_t. 
\end{align*}
where
\begin{align*}
\nonumber
\oe_t := & \frac{1}{n}\sum_{i}\e_t^\bi + \e_t,\\
\odelta_t := & \bm{\delta}_t + \frac{1}{n} \sum_{i=1}^n \bm{\delta}_{t}^{(i)},\\
\ob_t(\x): = & \frac{1}{n}\sum_i \mathscr{A}\left(\x_t;\xi_t^{(i)}\right).
\end{align*}
\begin{proof}
From the algorithm description in Algorithm~\ref{alg1}, we shall see that we are essentially using $\v_t$ as the gradient estimator of the full gradient. From the updating rule of $\v_t$, we have
\begin{align*}
\v_t
=& (1-\alpha_t)\v_{t-1} +\alpha_t C_{\omega}\left[\Delta_t\right]\\
= & (1-\alpha_t)\v_{t-1} + \alpha_t\left( \Delta_t - \bdelta_t\right)\\
 = & (1-\alpha_t)\v_{t-1} + \frac{\alpha_t}{n}\sum_{i}C_{\omega}\left[ \Delta_t^{(i)}  \right]+ \alpha_t\e_t - \alpha_t \bdelta_t\\
= & (1-\alpha_t)\v_{t-1} + \frac{\alpha_t}{n}\sum_{i}\left( \Delta_t^{\bi} - \bdelta_{t}^{(i)} \right) + \alpha_t\e_t - \alpha_t \bdelta_t\\
= & (1-\alpha_t)\v_{t-1}+ \frac{\alpha_t}{n}\sum_{i}\left(  \mathscr{A}\left(\x_t;\xi_t^{(i)}\right)  + \e_t^\bi - \bdelta_{t}^{(i)} \right)+ \alpha_t\e_t - \alpha_t \bdelta_t\\
= & (1-\alpha)\v_{t-1} + \alpha_t\ob_t(\x_t) + \alpha_t\oe_t - \alpha_t\odelta_t.
\end{align*}
Notice that in the deduction above we continuously using the fact that $C_{\omega}[\x] = \x - \bdelta$.
For $\oe_t$, we have
\begin{align*}
\oe_t =& \frac{1}{n}\sum_{i}\e_t^\bi + \e_t\\
= & \frac{1-\beta}{n}\sum_{i}\e_{t-1}^\bi + \frac{\beta}{n}\sum_{i}\left(\frac{\alpha_{t-1}}{\alpha_t}(2-\alpha_t)\delta_{t-1}^\bi + \frac{\alpha_{t-2}}{\alpha_t} (1-\alpha_t)\delta_{t-2}^\bi  \right)\\
&+ (1-\beta)\e_{t-1} + \beta\left(\frac{\alpha_{t-1}}{\alpha_t}(2-\alpha_t)\delta_{t-1} + \frac{\alpha_{t-2}}{\alpha_t} (1-\alpha_t)\delta_{t-2}  \right)\\
= & (1-\beta)\oe_{t-1} + \beta\left(\frac{\alpha_{t-1}}{\alpha_t}(2-\alpha_t)\odelta_{t-1} + \frac{\alpha_{t-2}}{\alpha_t} (1-\alpha_t)\odelta_{t-2}  \right),
\end{align*}
which completes the proof.
\end{proof}

\section{General Formulation of the Updating Rule}
As mentioned in Section~\ref{sec:theo}, all the error compensation mentioned so far admits the updating rule of \eqref{update:general_compress_v_t} and \eqref{update:general_compress_x_t}. Therefore in Table~\ref{table:eta_and_beta} we list the specific choice of parameters for each algorithm.
\begin{table}[!ht]
\centering
\begin{tabular}{|c|c|c|c|c|}
\hline
                      & $\eta_{1,t}$ &$\eta_{2,t}$    & $c_{1,t}$ &$c_{2,t}$ \\ \hline
\textbf{Without Compensation} &$1$ & $0$ & $0$ &$0$ \\ \hline
\textbf{\single} & $1$  & $1$ & $1$ & $0$  \\ \hline
\textbf{\CR} &$\alpha_t$ & $\alpha_t$ & $2-\alpha_t$ & $1-\alpha_t$  \\ \hline
\end{tabular}
\caption{Different choices of $\eta_{1,t}$, $\eta_{2,t}$, $c_{1,t}$, and $c_{2,t}$ for each algorithm.}\label{table:eta_and_beta}
\end{table}

\section{Proof of Theorems}
\subsection{Proof of Theorem~\ref{theo:general_nonconvex}}
\begin{proof}
In the view of $\{\hat{\x}_t\}$, we have
\begin{align*}
& \mathbb{E} f(\hat{\x}_{t+1}) - \mathbb{E} f(\hat{\x}_t)\\
\leq & \mathbb{E} \langle \nabla f(\hat{\x}_t), \hat{\x}_{t+1} - \hat{\x}_t \rangle + \frac{L}{2}\mathbb{E}\|\hat{\x}_{t+1} - \hat{\x}_t\|^2\\
= & -\gamma\mathbb{E} \langle \nabla f(\hat{\x}_t), \u_t \rangle + \frac{L\gamma^2}{2}\mathbb{E}\|\u_t\|^2\\
= & -\gamma\mathbb{E} \langle \nabla f(\hat{\x}_t), \hat{\u}_{t} + (\u_t - \hat{\u}_{t}) \rangle + \frac{L\gamma^2}{2}\mathbb{E}\|\hat{\u}_{t} + (\u_t - \hat{\u}_{t})\|^2\\
\leq & -\gamma\mathbb{E} \langle \nabla f(\hat{\x}_t), \hat{\u}_{t}\rangle -\gamma\mathbb{E} \langle \nabla f(\hat{\x}_t), \u_t - \hat{\u}_{t} \rangle + L\gamma^2\mathbb{E}\|\hat{\u}_{t}\|^2 + L\gamma^2\mathbb{E}\|\u_t - \hat{\u}_{t}\|^2\\
\leq & -\gamma\mathbb{E} \langle \nabla f(\hat{\x}_t), \hat{\u}_{t}\rangle +\frac{\gamma}{4}\mathbb{E} \| \nabla f(\hat{\x}_t)\|^2 + \gamma\E\| \u_t - \hat{\u}_{t} \|^2 + L\gamma^2\mathbb{E}\|\hat{\u}_{t}\|^2 + L\gamma^2\mathbb{E}\|\u_t - \hat{\u}_{t}\|^2\\
\leq & -\frac{\gamma}{2}\left(\mathbb{E} \|\nabla f(\hat{\x}_t)\|^2 + \E\|\hat{\u}_{t}\|^2 - \E\|\nabla f(\hat{\x}_t) - \hat{\u}_{t}\|^2\right) +\frac{\gamma}{4}\mathbb{E} \| \nabla f(\hat{\x}_t)\|^2\\
&+ \gamma\| \u_t - \hat{\u}_{t} \|^2 + L\gamma^2\mathbb{E}\|\hat{\u}_{t}\|^2 + L\gamma^2\mathbb{E}\|\u_t - \hat{\u}_{t}\|^2\\
\leq & -\frac{\gamma}{8}\mathbb{E} \|\nabla f(\hat{\x}_t)\|^2 - \frac{\gamma}{2}\left((1-2L\gamma)\E\|\hat{\u}_{t}\|^2 - \E\|\nabla f(\hat{\x}_t) - \hat{\u}_{t}\|^2 + \frac{\mathbb{E} \|\nabla f(\hat{\x}_t)\|^2}{4}\right)\\
&+ (1 + L\gamma)\gamma\E\| \u_t - \hat{\u}_{t} \|^2.
\end{align*}
Summing up the inequality above from $t=0$ to $t=T$, with rearrangement we get
\begin{align*}
&\frac{1}{T}\sum_{t=0}^{T-1}\E\|\nabla f(\hat{\x}_t)\|^2\\
\leq & \frac{8(f(\hat{\x}_0) - f(\hat{\x}_{T}))}{\gamma T} - \frac{4}{T}\sum_{t=0}^{T-1}\left((1-2L\gamma)\E\|\hat{\u}_{t}\|^2 - \E\|\nabla f(\hat{\x}_t) - \hat{\u}_{t}\|^2+ \frac{\mathbb{E} \|\nabla f(\hat{\x}_t)\|^2}{4}\right)\\
&+ \frac{8+ 8L\gamma}{T}\sum_{t=0}^{T-1}\E\| \u_t - \hat{\u}_{t} \|^2\\
\leq & \frac{8(f(\x_0) - f^*)}{\gamma T} - \frac{4}{T}\sum_{t=0}^{T-1}\left((1-2L\gamma)\E\|\hat{\u}_{t}\|^2 - \E\|\nabla f(\hat{\x}_t) - \hat{\u}_{t}\|^2+ \frac{\mathbb{E} \|\nabla f(\hat{\x}_t)\|^2}{4}\right)\\
&+ \frac{8 + 8L\gamma}{T}\sum_{t=0}^{T-1}\E\| \u_t - \hat{\u}_{t} \|^2.\numberthis\label{proof:theo_y_t:eq1}
\end{align*}
In order to upper bound $\E\| \u_t - \hat{\u}_{t} \|^2$, we have
\begin{align*}
&\E\| \u_t - \hat{\u}_{t} \|^2\\
=& \E\| (1-\alpha)(\u_{t-1} - \hat{\u}_{t-1}) + \alpha(\A(\x_t;\xi_t) - \A(\hat{\x}_t;\xi_t))  \|^2\\
\leq & \left( 1 + \alpha\right)\E\| (1-\alpha)(\u_{t-1} - \hat{\u}_{t-1})\|^2 +  \left(1+ \frac{1}{\alpha} \right)\E\|\alpha(\A(\x_t;\xi_t) - \A(\hat{\x}_t;\xi_t))  \|^2\\
= & \left( 1 - \alpha^2\right)(1-\alpha)\E\| \u_{t-1} - \hat{\u}_{t-1}\|^2 +  \left(\alpha^2+\alpha \right)\E\|\A(\x_t;\xi_t) - \A(\hat{\x}_t;\xi_t)  \|^2\\
\leq & \left( 1 - \alpha\right)\E\| \u_{t-1} - \hat{\u}_{t-1}\|^2 +  2\alpha\E\|\A(\x_t;\xi_t) - \A(\hat{\x}_t;\xi_t)  \|^2\\
= & (1-\alpha)^t\E\| \u_{0} - \u_{0}(\y)\|^2 + 2\alpha\sum_{s=0}^t(1-\alpha)^{t-s}\E\|\A(\x_s;\xi_s) - \A(\hat{\x}_s;\xi_s)  \|^2\\
= &  2\alpha\sum_{s=0}^t(1-\alpha)^{t-s}\E\|\A(\x_s;\xi_s) - \A(\hat{\x}_s;\xi_s)  \|^2\quad\text{($\u_{0} - \u_{0}(\y) = \bm{0}$)}\\
\leq & 2\alpha L^2_{\A}\sum_{s=0}^t(1-\alpha)^{t-s}\E\|\x_s - \hat{\x}_s \|^2.
\end{align*}
Therefore we have
\begin{align*}
\sum_{t=0}^{T-1} \E\| \u_t - \hat{\u}_{t} \|^2\leq & 2\alpha L^2_{\A}\sum_{t=0}^{T-1}\sum_{s=0}^t(1-\alpha)^{t-s}\E\|\x_s - \hat{\x}_s \|^2\\
= & 2\alpha L^2_{\A}\sum_{s=0}^{T-1}\sum_{t=s}^{T-1}(1-\alpha)^{t-s}\E\|\x_s - \hat{\x}_s \|^2\\
\leq & 2 L^2_{\A}\sum_{t=0}^{T-1}\E\|\x_t - \hat{\x}_t \|^2.\numberthis\label{proof:theo_y_t:eq2}
\end{align*}
Combining \eqref{proof:theo_y_t:eq1} and \eqref{proof:theo_y_t:eq2} together we get
\begin{align*}
&\frac{1}{T}\sum_{t=0}^T\E\|\nabla f(\hat{\x}_t)\|^2\\
\leq & \frac{8(f(\x_0) - f^*)}{\gamma T} - \frac{4}{T}\sum_{t=0}^T\left((1-2L\gamma)\E\|\hat{\u}_{t}\|^2 - \E\|\nabla f(\hat{\x}_t) - \hat{\u}_{t}\|^2+ \frac{\mathbb{E} \|\nabla f(\hat{\x}_t)\|^2}{4}\right)\\
&+ \frac{16(1 + L\gamma) L^2_{\A}}{T}\sum_{t=0}^T\E\|\x_t - \hat{\x}_t \|^2.
\end{align*}
Hence $\|\nabla f(\x_t)\|^2$ can be upper bounded by using 
\begin{align*}
&\frac{1}{T}\sum_{t=0}^{T-1}\E\|\nabla f(\x_t)\|^2\\
\leq & \frac{1}{T}\sum_{t=0}^{T-1}\E\|\nabla f(\hat{\x}_t) + (\nabla f(\x_t) - \nabla f(\hat{\x}_t))\|^2\\
\leq & \frac{2}{T}\sum_{t=0}^{T-1}\E\|\nabla f(\hat{\x}_t)\|^2 + \frac{2}{T}\sum_{t=0}^{T-1}\E\|\nabla f(\x_t) - \nabla f(\hat{\x}_t)\|^2\\
\leq & \frac{2}{T}\sum_{t=0}^{T-1}\E\|\nabla f(\hat{\x}_t)\|^2 + \frac{2L^2}{T}\sum_{t=0}^{T-1}\E\|\x_t - \hat{\x}_t\|^2\\
\leq & \frac{16(f(\x_0) - f^*)}{\gamma T} - \frac{8}{T}\sum_{t=0}^{T-1}\left((1-2L\gamma)\E\|\hat{\u}_{t}\|^2 - \E\|\nabla f(\hat{\x}_t) - \hat{\u}_{t}\|^2+ \frac{\mathbb{E} \|\nabla f(\hat{\x}_t)\|^2}{4}\right)\\
&+ \frac{ 32(1 + L\gamma) L^2_{\A} + 2L^2}{T}\sum_{t=0}^{T-1}\E\| \u_t - \hat{\u}_{t} \|^2.
\end{align*}
If $L\gamma \leq 1$, we have
\begin{align*}
&\frac{1}{T}\sum_{t=0}^{T-1}\E\|\nabla f(\x_t)\|^2\\
\leq & \frac{16(f(\x_0) - f^*)}{\gamma T} - \frac{8}{T}\sum_{t=0}^{T-1}\left((1-2L\gamma)\E\|\hat{\u}_{t}\|^2 - \E\|\nabla f(\hat{\x}_t) - \hat{\u}_{t}\|^2 +\frac{\mathbb{E} \|\nabla f(\hat{\x}_t)\|^2}{4}\right)\\ 
 &+ \frac{64 L^2_{\A} + 2L^2}{T}\sum_{t=0}^{T-1}\E\|\x_t - \hat{\x}_t \|^2,
\end{align*}
which completes the proof.
\end{proof}

\subsection{Proof of Theorem~\ref{theo:momentum},~\ref{theo:storm} and~\ref{theo:igt}}
In this section, we are going to present the proof of theorems for different gradient estimators. We start with the key lemma for \bf{Momentum SGD}.
\begin{lemma}\label{lemma:contraction}
For two nonnegative sequences $\{a_t\}$ and $\{b_t\}$ that satisfy
\begin{align*}
a_t\leq \rho a_{t-1} + b_t,
\end{align*}
where $\rho\in(0,1)$ is a constant, we have
\begin{align*}
\sum_{t=0}^{T}a_t\leq \frac{\sum_{t=1}^{T}b_t + a_0}{1-\rho}.
\end{align*}
\end{lemma}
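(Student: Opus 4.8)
The plan is to avoid unrolling the recursion term by term and instead sum it once, then treat the total $S := \sum_{t=0}^{T} a_t$ as a single unknown and solve a linear inequality for it. First I would sum both sides of the hypothesis $a_t \le \rho a_{t-1} + b_t$ over $t = 1, \ldots, T$, obtaining
\begin{align*}
\sum_{t=1}^{T} a_t \le \rho \sum_{t=1}^{T} a_{t-1} + \sum_{t=1}^{T} b_t = \rho \sum_{t=0}^{T-1} a_t + \sum_{t=1}^{T} b_t,
\end{align*}
where the last equality is just the reindexing $t-1 \mapsto t$ on the first sum.

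Next I would rewrite the two partial sums of $a$ in terms of $S$, being careful about the boundary terms: the left-hand sum is $S - a_0$ (it omits the $t=0$ term) and the shifted right-hand sum is $S - a_T$ (it omits the $t=T$ term). Substituting gives
\begin{align*}
S - a_0 \le \rho (S - a_T) + \sum_{t=1}^{T} b_t.
\end{align*}
Then I would use the nonnegativity assumptions: since $a_T \ge 0$ and $\rho > 0$, the term $-\rho a_T$ is nonpositive and can be dropped, yielding $S - a_0 \le \rho S + \sum_{t=1}^{T} b_t$, i.e.
\begin{align*}
(1-\rho) S \le a_0 + \sum_{t=1}^{T} b_t.
\end{align*}
Finally, dividing by $1-\rho$, which is strictly positive because $\rho \in (0,1)$, produces exactly the claimed bound $S \le \bigl(a_0 + \sum_{t=1}^{T} b_t\bigr)/(1-\rho)$.

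There is no genuine obstacle here; the lemma is a routine geometric-contraction estimate. The only points requiring the slightest care are the bookkeeping of the two boundary terms introduced by the index shift (dropping $a_0$ on the left and $a_T$ on the right), and making explicit where each hypothesis is used, namely nonnegativity of the $a_t$ to discard $-\rho a_T$ and $\rho < 1$ to divide safely. An equivalent route would be to unroll the recursion to $a_t \le \rho^t a_0 + \sum_{s=1}^{t} \rho^{t-s} b_s$, sum over $t$, swap the order of summation, and bound each resulting geometric series by $1/(1-\rho)$; I would prefer the direct summation above since it sidesteps the double sum and the interchange of summation order entirely.
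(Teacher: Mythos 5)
Your proof is correct and is essentially identical to the paper's: both sum the recursion over $t=1,\dots,T$, reindex, use nonnegativity of $a_T$ to absorb the shifted sum into the full sum (the paper writes $\rho\sum_{t=0}^{T-1}a_t \le \rho\sum_{t=0}^{T}a_t$, you write $\rho(S-a_T)\le\rho S$, which is the same step), add back $a_0$, and divide by $1-\rho$. No meaningful difference in approach.
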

\begin{proof}
Since 
\begin{align*}
a_t\leq \rho a_{t-1} + b_t,
\end{align*}
adding the inequality above from $t=1$ to $t=T$ we get
\begin{align*}
\sum_{t=1}^{T}a_t \leq& \rho\sum_{t=0}^{T-1}a_t + \sum_{t=1}^{T}b_t\\
\leq & \rho\sum_{t=0}^{T}a_t + \sum_{t=1}^{T}b_t.
\end{align*}
Adding $a_0$ to both side we get
\begin{align*}
\sum_{t=0}^{T}a_t\leq & \rho\sum_{t=0}^{T}a_t + \sum_{t=1}^{T}b_t + a_0
\end{align*}
and the proof is complete by combining the sum of $a_t$ and dividing both sides by $1-\rho$.
\end{proof}

\begin{proof}[Proof of Theorem~\ref{theo:momentum}]
For \textbf{Momentum SGD}, denote $\m_t = \alpha\sum_{s=0}^t(1-\alpha)^{t-s} \nabla f(\x_s)$, then we have
\begin{align*}
&\E\|\nabla f(\x_t) - \u_{t}\|^2 - \E\|\u_{t}\|^2\\
=& \E\|\nabla f(\x_t)\|^2 - 2\E\langle \nabla f(\x_t),\u_t\rangle\\
= &\E\|\nabla f(\x_t)\|^2 - 2\E\langle \nabla f(\x_t),\m_t\rangle - 2\E\langle \nabla f(\x_t),\u_t - \m_t\rangle\\
%= &\E\|\nabla f(\x_t)\|^2 - 2\E\langle \nabla f(\x_t),\m_t\rangle - 2\E\langle \nabla f(\x_t),\u_t - \m_t\rangle\\
= & -\E\|\m_t\|^2 + \E\|\nabla f(\x_t) - \m_t\|^2- 2\E\langle \nabla f(\x_t),\u_t - \m_t\rangle.\numberthis\label{lemma:momentum_eq0}
\end{align*}
For $\E\|\nabla f(\x_t) - \m_t\|^2$ we have
\begin{align*}
&\E\|\nabla f(\x_t) - \m_t\|^2\\
=&\E\|\nabla f(\x_t)-(1-\alpha)\m_{t-1}-\alpha f(\x_{t})\|^2\\
= & \E\|(1-\alpha)(\nabla f(\x_{t-1}) - \m_{t-1} + \nabla f(\x_t) - \nabla f(\x_{t-1}))\|^2\\
\leq & \left(1+ \alpha\right)(1-\alpha)^2\E\|\nabla f(\x_{t-1}) - \m_{t-1}\|^2 + \left(1+\frac{1}{\alpha}\right)(1-\alpha)^2\E\|\nabla f(\x_t) - \nabla f(\x_{t-1}))\|^2\\
\leq & (1-\alpha)\E\|\nabla f(\x_{t-1}) - \m_{t-1}\|^2 + \frac{1}{\alpha}\E\|\nabla f(\x_t) - \nabla f(\x_{t-1}))\|^2\\
\leq & (1-\alpha)\E\|\nabla f(\x_{t-1}) - \m_{t-1}\|^2 + \frac{L^2\gamma^2}{\alpha}\E\|\u_t\|^2. 
\end{align*}
Therefore from Lemme~\ref{lemma:contraction},
\begin{align*}
\sum_{t=0}^{T-1}\E\|\nabla f(\x_t) - \m_t\|^2 \leq & \frac{L^2\gamma^2}{\alpha^2}\sum_{t=0}^{T-1}\E\|\u_t\|^2.
\numberthis\label{lemma:momentum_eq1}
\end{align*}
For $\E\langle \nabla f(\x_t),\u_t - \m_t\rangle$, we have
\begin{align*}
\E\left\langle \nabla f(\x_t), \u_t - \m_t\right\rangle
=& \E\left\langle \nabla f(\x_{t}), (1-\alpha)(\u_{t-1} - \m_{t-1})+ \alpha( \g_t- \nabla f(\x_t)  )  \right\rangle\\
=& \E\left\langle \nabla f(\x_{t}), (1-\alpha)(\u_{t-1} - \m_{t-1}) \right\rangle\\
= & (1-\alpha)\E\left\langle \nabla f(\x_{t-1}), \u_{t-1} - \m_{t-1}  \right\rangle\\
&+ (1-\alpha)\E\left\langle \nabla f(\x_{t}) - \nabla f(\x_{t-1}), \u_{t-1} - \m_{t-1}  \right\rangle,\numberthis\label{lemma:momentum_eq2}
\end{align*}
where $\g_t=\nabla F(\x_t;\xi_t).$

Notice that for $\nabla f(\x_{t}) - \nabla f(\x_{t-1}) $, we have
\begin{align*}
&\E\left\|\nabla f(\x_{t}) - \nabla f(\x_{t-1}) \right\|^2\\
\leq & L^2\gamma^2\E\|\u_{t-1} - \m_{t-1} + \m_{t-1}\|^2\\
\leq & 2L^2\gamma^2\E\| \m_{t-1}\|^2+2L^2\gamma^2\E\|\u_{t-1} - \m_{t-1}\|^2 .\numberthis\label{lemma:momentum_eq3}
\end{align*}
Combining \eqref{lemma:momentum_eq2} and \eqref{lemma:momentum_eq3} together we get
\begin{align*}
&\E\left\langle \nabla f(\x_t), \u_t - \m_t\right\rangle\\
\leq & (1-\alpha)\E\left\langle \nabla f(\x_{t-1}), \u_{t-1} - \m_{t-1}  \right\rangle + (1-\alpha)\E\left( \frac{1}{2\gamma L}\left\|\nabla f(\x_{t}) - \nabla f(\x_{t-1}) \right\|^2 + \frac{\gamma L}{2}\E\left\| \u_{t-1} - \m_{t-1}  \right\|^2\right)\\
\leq & (1-\alpha)\E\left\langle \nabla f(\x_{t-1}), \u_{t-1} - \m_{t-1}  \right\rangle + (1-\alpha)\E\left( \gamma L\E\| \m_{t-1}\|^2 +\frac{3\gamma L}{2}\E\left\| \u_{t-1} - \m_{t-1}  \right\|^2\right),
\end{align*}

Denote $c_t:=\gamma L\E\| \m_{t-1}\|^2 +\frac{3\gamma L}{2}\E\left\| \u_{t-1} - \m_{t-1}  \right\|^2$, then we get
\begin{align*}
\E\left\langle \nabla f(\x_t), \u_t - \E\u_t\right\rangle = (1-\alpha)^t\E\left\langle \nabla f(\x_{0}), \u_{0} - \m_{0}\right\rangle+ \sum_{s=1}^t(1-\alpha)^{t-s}c_s.
\end{align*}
Since $\E\A(\x_0;\xi_0) = \E\nabla F(\x_0;\xi_0)=\nabla F(\x_0)$, which means  $\E\left\langle \nabla f(\x_{0}), \u_{0} - \m_{0}\right\rangle = 0$. So the equation above becomes 
\begin{align*}
\E\left\langle \nabla f(\x_t), \u_t - \m_t\right\rangle =\sum_{s=1}^t(1-\alpha)^{t-s}c_s,
\end{align*}
and \begin{align*}
\sum_{t=0}^{T-1}\E\left\langle \nabla f(\x_t), \u_t - \E\u_t\right\rangle\leq &\frac{\sum_{t=0}^{T-1} |c_t|}{1-(1-\alpha)}\\
= & \frac{\gamma L}{\alpha}\sum_{t=0}^{T-1}\E\|\m_t\|^2 + \frac{3\gamma L}{2\alpha}\sum_{t=0}^{T-1}\E\|\u_t -\m_t\|^2
.\numberthis\label{lemma:momentum_eq4}
\end{align*}
Combing \eqref{lemma:momentum_eq4}, \eqref{lemma:momentum_eq1} and \eqref{lemma:momentum_eq0}, we get
\begin{align*}
&\sum_{t=0}^{T-1}\left(\E\|\nabla f(\x_t) - \u_{t}\|^2 - (1-2L\gamma)\E\|\u_{t}\|^2\right)\\
\leq & -\sum_{t=0}^{T-1}\E\|\m_t\|^2 + \left(\frac{L^2\gamma^2}{\alpha^2} + 2L\gamma\right)\sum_{t=0}^{T-1}\E\|\u_t\|^2 + \frac{2\gamma L}{\alpha}\sum_{t=0}^{T-1}\E\|\m_t\|^2 + \frac{3\gamma L}{\alpha}\sum_{t=0}^{T-1}\E\|\u_t -\m_t\|^2\\
= & -\left(1- \frac{2L^2\gamma^2}{\alpha^2}-4L\gamma - \frac{2\gamma L}{\alpha}  \right)\sum_{t=0}^{T-1}\E\|\m_t\|^2 + \left(\frac{2L^2\gamma^2}{\alpha^2} +\frac{3\gamma L}{\alpha} + 4L\gamma \right)\sum_{t=0}^{T-1}\E\|\u_t -\m_t\|^2,\numberthis\label{lemma:momentum_eq6}
\end{align*}
For $\E\left\| \u_t - \m_t \right\|^2$, we have
\begin{align*}
\E\|\u_t - \m_t \|^2 = & \E\|(1-\alpha)(\u_{t-1} - \m_{t-1}) + \alpha(\g_t - \nabla f(\x_t)  ) \|^2\\
= & \E_{1:t-1}\E_t\|(1-\alpha)(\u_{t-1} - \m_{t-1}) + \alpha(\g_t - \nabla f(\x_t)  ) \|^2\\
= & \E_{1:t-1}\E_t\|(1-\alpha)(\u_{t-1} - \m_{t-1})\|^2 + \E_{1:t-1}\E_t\|\alpha(\g_t - \nabla f(\x_t)  ) \|^2\\
\leq & (1-\alpha)^2\E\|\u_{t-1} - \m_{t-1}\|^2 + \alpha^2\sigma^2\\
\leq & (1-\alpha)\E\|\u_{t-1} - \m_{t-1}\|^2 + \alpha^2\sigma^2\\
=& (1-\alpha)^t\E\|\u_{0}-\m_0\|^2+\alpha^2\sum_{s=1}^t(1-\alpha)^{t-s}\sigma^2.
\end{align*}
Therefore we have
\begin{align*}
\E\|\u_t - \m_t \|^2\leq \alpha\sigma^2.
\end{align*}
So \eqref{lemma:momentum_eq6} becomes
\begin{align*}
&\sum_{t=0}^{T-1}\left(\E\|\nabla f(\x_t) - \u_{t}\|^2 -(1-2L\gamma) \E\|\u_{t}\|^2\right)\\
\leq & -\left(1- \frac{2L^2\gamma^2}{\alpha^2} - \frac{4\gamma L}{\alpha} -2L\gamma \right)\sum_{t=0}^{T-1}\E\|\m_t\|^2 + \left(\frac{2L^2\gamma^2}{\alpha^2} +\frac{3\gamma L}{2\alpha} + 4L\gamma \right)\sum_{t=0}^{T-1}\alpha\sigma^2
\end{align*}
Therefore, for \textbf{Momentum SGD}
\begin{align*}
\sum_{t=0}^{T-1}A_t = &  \sum_{t=0}^{T-1}\E\|\nabla f({\x}_t) - {\u}_{t}\|^2 - (1-2L\gamma)\E\|{\u}_{t}\|^2 - \frac{1}{4}\E\|\nabla f(\x_t)\|^2\\
\leq & -\left(1- \frac{2L^2\gamma^2}{\alpha^2} - \frac{4\gamma L}{\alpha} -2L\gamma \right)\sum_{t=0}^{T-1}\E\|\m_t\|^2 + \left(\frac{2L\gamma}{\alpha} +\frac{3}{2} + 4\alpha \right)\gamma LT\sigma^2.
\end{align*}
So if $\gamma L\leq \frac{\alpha}{12}$ and $\gamma L\leq \frac{1}{4}$, we have
\begin{align*}
\sum_{t=0}^{T-1}A_t \leq \frac{17\gamma L\sigma^2T}{3}
\end{align*}

\end{proof}

\begin{proof}[Proof of Theorem~\ref{theo:storm}]
For STROM, the most important part is to upper bound $\sum_{t=0}^{T-1}\mathbb{E}\|\u_t - \nabla f(\x_t)\|^2$.  Therefore we first focus on this term.

Denoting $\e_t := \u_t - \nabla f(\x_t)$ and $\og_t(\x) :=\frac{1}{n}\sum_{i}\nabla F(\x;\xi_t^\bi)$, we get
\begin{align*}
&\mathbb{E}\|\e_t\|^2\\
=& \mathbb{E}\|(1-\alpha)\e_{t-1} + (\og_t(\x_t) -\nabla f(\x_t)) - (1-\alpha)(\og_{t}(\x_{t-1}) - \nabla f(\x_{t-1})) \|^2\\
= & \mathbb{E}\|(1-\alpha)\e_{t-1}\|^2 +\mathbb{E}\| (\og_t(\x_t) -\nabla f(\x_t)) - (1-\alpha)(\og_{t}(\x_{t-1}) - \nabla f(\x_{t-1})) \|^2\\
= & (1-\alpha)^2 \mathbb{E}\|\e_{t-1}\|^2 + \mathbb{E}\left\|\alpha(\og_t(\x_t) -\nabla f(\x_t)) + (1-\alpha)\left(\og_t(\x_t) - \og_t(\x_{t-1}) + \nabla f(\x_{t-1}) - \nabla f(\x_t)  \right)   \right\|^2 \\
\leq & (1-\alpha)^2 \mathbb{E}\|\e_{t-1}\|^2 + 2\alpha^2\mathbb{E}\left\|\og_t(\x_t) -\nabla f(\x_t)\right\|^2 + 2(1-\alpha)^2\mathbb{E}\left\|\og_t(\x_t) - \og_t(\x_{t-1}) + \nabla f(\x_{t-1}) - \nabla f(\x_t)    \right\|^2 \\
= & (1-\alpha)^2 \mathbb{E}\|\e_{t-1}\|^2 + 2\alpha^2\mathbb{E}\left\|\og_t(\x_t) -\nabla f(\x_t)\right\|^2\\
&+ \frac{2(1-\alpha)^2}{n^2}\mathbb{E}\left\|\sum_{i=1}^n\left(\nabla F(\x_t;\xi_t^\bi) - \nabla f_i(\x_t) - \nabla F(\x_{t-1};\xi_{t}^\bi) + \nabla f_i(\x_{t-1}) \right)    \right\|^2 \\
= & (1-\alpha)^2 \mathbb{E}\|\e_{t-1}\|^2 + 2\alpha^2\mathbb{E}\left\|\og_t(\x_t) -\nabla f(\x_t)\right\|^2\\
&+ \frac{2(1-\alpha)^2}{n^2}\sum_{i=1}^n\mathbb{E}\left\|\nabla F(\x_t;\xi_t^\bi) - \nabla f_i(\x_t) - \nabla F(\x_{t-1};\xi_{t}^\bi) + \nabla f_i(\x_{t-1})    \right\|^2 \\
\leq & (1-\alpha)^2 \mathbb{E}\|\e_{t-1}\|^2 + 2\alpha^2\mathbb{E}\left\|\og_t(\x_t) -\nabla f(\x_t)\right\|^2\\
&+ \frac{4(1-\alpha)^2}{n^2}\sum_{i=1}^n\mathbb{E}\left\|\nabla F(\x_t;\xi_t^\bi)  - \nabla F(\x_{t-1};\xi_{t}^\bi)    \right\|^2 + \frac{4(1-\alpha)^2}{n^2}\sum_{i=1}^n\mathbb{E}\left\|\nabla f_i(\x_t)  - \nabla f_i(\x_{t-1})    \right\|^2 \\
\leq & (1-\alpha)^2 \mathbb{E}\|\e_{t-1}\|^2 + \frac{2\alpha^2\sigma^2}{n} + \frac{4(1-\alpha)^2(L^2+L_F^2)}{n}\mathbb{E} \|\x_t - \x_{t-1}\|^2\\
\leq & (1-\alpha)^2 \mathbb{E}\|\e_{t-1}\|^2 + \frac{2\alpha^2\sigma^2}{n} + \frac{4(1-\alpha)^2(L^2+L_F^2)\gamma^2}{n}\mathbb{E} \|\u_{t-1}\|^2.\numberthis\label{proof:storm_bound_et}
\end{align*}
Using Lemma~\ref{lemma:contraction}, we get 
\begin{align*}
\sum_{t=0}^{T-1}\mathbb{E}\|\e_t\|^2 \leq & \frac{\sum_{t=0}^{T-1}\left( \frac{2\alpha^2\sigma^2}{n} + \frac{4(1-\alpha)^2(L^2+L_F^2)\gamma^2}{n}\mathbb{E} \|\u_{t-1}\|^2 \right) + \mathbb{E}\|\u_0 - \nabla f(\x_0)\|^2 }{1- (1-\alpha)^2}\\
\leq & \frac{1}{\alpha}\left( \frac{2\alpha^2\sigma^2 T}{n} + \frac{4(1-\alpha)^2(L^2+L_F^2)\gamma^2}{n}\sum_{t=0}^{T-1}\mathbb{E} \|\u_{t-1}\|^2 \right) + \frac{\mathbb{E}\|\u_0 - \nabla f(\x_0)\|^2}{\alpha}\\
= &  \frac{2\alpha\sigma^2 T}{n} + \frac{4(1-\alpha)^2(L^2+L_F^2)\gamma^2}{\alpha n}\sum_{t=0}^{T-1}\mathbb{E} \|\u_{t-1}\|^2 + \frac{\mathbb{E}\|\u_0 - \nabla f(\x_0)\|^2}{\alpha}.\numberthis\label{theo:storm_eq1}
\end{align*}
Since $A_t = \E\|\nabla f(\hat{\x}_t) - \hat{\u}_{t}\|^2 - (1-2L\gamma)\E\|\hat{\u}_{t}\|^2- \frac{\mathbb{E} \|\nabla f(\hat{\x}_t)\|^2}{4}$,
therefore if $\frac{4(1-\alpha)^2(L^2+L_F^2)\gamma^2}{\alpha n} \leq \frac{1}{2}$ and $1-2L\gamma\geq \frac{1}{2}$, then we have
\begin{align*}
\sum_{t=0}^{T-1} A_t \leq& \frac{2\alpha\sigma^2 T}{n}  + \frac{\mathbb{E}\|\u_0 - \nabla f(\x_0)\|^2}{\alpha}\\
\leq & \frac{2\alpha\sigma^2 T}{n}  + \frac{\sigma^2}{n\alpha B_0}.
\end{align*}

\end{proof}

\begin{proof}[Proof of Theorem~\ref{theo:igt}]
Here we introduce an auxiliary variable that is defined as
\begin{align*}
Z(\x,\y) := \nabla f(\x) - \nabla f(\y) - \langle \nabla^2 f(\y),\x - \y\rangle.
\end{align*}
Then from the bounded Hessian in  Assumption~\ref{ass:igt}, we have
\begin{align*}
\|Z(\x,\y)\| \leq \rho\|\x - \y\|^2.
\end{align*}
Moreover, since $\u_t$ is a weighted average over all history stochastic gradients, under the bounded stochastic gradient of Assumption~\ref{ass:igt}, we have
\begin{align*}
\E\|\x_t - \x_{t-1}\|^2\leq \gamma^2\Delta^2.
\end{align*}
Denote $\e_t:= \v_t- \nabla f(\x_t)$, we get
\begin{align*}
&\mathbb{E}\|\e_t\|^2\\
= &\mathbb E\left\|(1-\alpha)\e_{t-1}+ \alpha\nabla F\left(\x_t + \frac{1-\alpha}{\alpha}(\x_t - \x_{t-1});\xi_{t}^{(i)}\right) -\nabla f(\x_t) + (1-\alpha)\nabla f(\x_{t-1})  \right\|^2 \\
= &\mathbb E\left\|(1-\alpha)\e_{t-1}+ \alpha\nabla f\left(\x_t + \frac{1-\alpha}{\alpha}(\x_t - \x_{t-1}) \right) -\nabla f(\x_t) + (1-\alpha)\nabla f(\x_{t-1})  \right\|^2\\
&+ \mathbb E\left\|\alpha\nabla F\left(\x_t + \frac{1-\alpha}{\alpha}(\x_t - \x_{t-1});\xi_{t}^{(i)}\right)  - \alpha\nabla f\left(\x_t + \frac{1-\alpha}{\alpha}(\x_t - \x_{t-1}) \right) \right\|^2\\
\leq & \mathbb E\left\|(1-\alpha)\e_{t-1}+ \alpha\nabla f\left(\x_t + \frac{1-\alpha}{\alpha}(\x_t - \x_{t-1}) \right) -\nabla f(\x_t) + (1-\alpha)\nabla f(\x_{t-1})  \right\|^2 + \frac{\alpha^2\sigma^2}{n}\\
=& \mathbb{E}\left\|(1-\alpha)\e_{t-1}  + (1-\alpha)Z(\x_{t-1},\x_t) + \alpha Z\left(\x_t + \frac{1-\alpha}{\alpha}(\x_t - \x_{t-1}),\x_t\right)   \right\|^2+ \frac{\alpha^2\sigma^2}{n}\\
\leq & \left(1+ \alpha\right)\mathbb{E}\left\|(1-\alpha)\e_{t-1}\right\|^2  + \left(1+ \frac{1}{\alpha}\right)\mathbb{E}\left\|(1-\alpha)Z(\x_{t-1},\x_t) + \alpha Z\left(\x_t + \frac{1-\alpha}{\alpha}(\x_t - \x_{t-1}),\x_t\right)   \right\|^2 \\
&+ \frac{\alpha^2\sigma^2}{n}\qquad (\mbox{applying }(x+y)^2\leq (1+\alpha)x^2+(1+{1\over\alpha})y^2) \\
\leq & \left(1 - \alpha\right)\mathbb{E}\left\|\e_{t-1}\right\|^2  +  \frac{(\alpha+1)}{\alpha^2}\mathbb{E}\left\|(1-\alpha)Z(\x_{t-1},\x_t)\right\|^2\\
&+\frac{(\alpha+1)}{\alpha(1-\alpha)}\mathbb{E}\left\| \alpha Z\left(\x_t + \frac{1-\alpha}{\alpha}(\x_t - \x_{t-1}),\x_t\right)   \right\|^2 + \frac{\alpha^2\sigma^2}{n}\\
\leq & \left(1 - \alpha\right)\mathbb{E}\left\|\e_{t-1}\right\|^2  +  \frac{(1-\alpha^2)(1-\alpha)\rho^2}{\alpha^3}\mathbb{E}\left\|\x_{t-1} - \x_t\right\|^4  + \frac{\alpha^2\sigma^2}{n}\\
\leq & \left(1 - \alpha\right)\mathbb{E}\left\|\e_{t-1}\right\|^2  +  \frac{(1-\alpha)\rho^2}{\alpha^3}\mathbb{E}\left\|\x_{t-1} - \x_t\right\|^4  + \frac{\alpha^2\sigma^2}{n}.\numberthis\label{lemma:igt_et_eq1}
\end{align*}

In this case, we also have $\mathbb E_{\xi_{t-1}}\|\x_t - \x_{t-1}\|^2\leq \gamma^2\Delta^2$, and \eqref{lemma:igt_et_eq1} becomes
\begin{align*}
\mathbb{E}\|\e_t\|^2 \leq \left(1 - \alpha\right)\mathbb{E}\left\|\e_{t-1}\right\|^2 + \frac{(1-\alpha)\rho^2\gamma^4\Delta^4}{\alpha^3}+ \frac{\alpha^2\sigma^2}{n}.\numberthis\label{lemma:igt_et_eq2}
\end{align*}
% Notice that we have $\e_0= \bm{0}$, so the inequality above still holds if we define $\mathbb E_{-1}\|\e_0\|^2 = 0$ and $\x_{-1} = \x_0$.
Using Lemma~\ref{lemma:contraction} and \eqref{lemma:igt_et_eq2}, we get 
\begin{align*}
\sum_{t=0}^{T}\mathbb{E}\|\nabla f(\x_t) - \v_t\|^2 \leq & \frac{\sum_{t=1}^{T}\left( \frac{\alpha^2\sigma^2}{n}+ \frac{\rho^2\gamma^4\Delta^4}{\alpha^3} \right) + \mathbb{E}\|\e_0\|^2}{1- (1-\alpha)}
\leq  \frac{\alpha\sigma^2 T}{n}  + \frac{\sigma^2}{\alpha nB_{0}} +\frac{\rho^2\gamma^4\Delta^4T}{\alpha^4}.%\numberthis\label{theo:storm_eq1}
\end{align*}
The lemma is proved.
\end{proof}

\section{Proof of Corollary}

\subsection{Proof of Corollary~\ref{coro:momentum}}

\begin{proof} Combining Theorem~\ref{theo:general_nonconvex} and~\ref{theo:momentum} together, we shall get
\begin{align*}
\frac{1}{T}\sum_{t=0}^T\E\|\nabla f(\x_t)\|^2
\leq & \frac{16(f(\x_0) - f^*)}{\gamma T}+ \frac{136\gamma L\sigma^2 }{3n}
+ (64 L^2_{\A} + 2L^2)\gamma^2\alpha^2T\epsilon^2.
\end{align*}
Since $L_{\A} = L$, after setting $\gamma =\min\left\{\frac{\alpha}{12L},\frac{\sqrt{n}}{\sqrt{T}\sigma},\left( \frac{1}{\epsilon^2T}\right)^{\frac{1}{3}}  \right\} $, it can be easily verified that we have
\begin{align*}
\frac{1}{T}\sum_{t=0}^T\E\|\nabla f(\x_t)\|^2\leq \mathcal{O}\left( \frac{\sigma}{\sqrt{nT}} + \frac{\alpha^2}{(\epsilon T)^{\frac{2}{3}}} + \frac{1}{\alpha T}  \right),
\end{align*}
where we treat $f(\x_0) - f^*$ and $L$ as constants.
\end{proof}

\subsection{Proof of Corollary~\ref{coro:storm}}
\begin{proof}
Combining Theorem~\ref{theo:general_nonconvex} and ~\ref{theo:storm} together, and setting $\alpha = \frac{8L^2\gamma^2}{n}$ we shall get
\begin{align*}
\frac{1}{T}\sum_{t=0}^T\E\|\nabla f(\x_t)\|^2
\leq & \frac{16(f(\x_0) - f^*)}{\gamma T}+ \frac{128L^2\gamma^2\sigma^2 }{n^2}  + \frac{\sigma^2}{L^2\gamma^2 B_0 T}
+ \frac{64(64 L^2_{\A} + 2L^2)L^4\gamma^6\epsilon^2}{n}.
\end{align*}
Since $L_{\A} = 2L$, after setting $\gamma =\min\left\{\frac{1}{4L},\left(\frac{n^2}{\sigma^2 T}\right)^{\frac{1}{3}},\left(\frac{n}{\epsilon^2 T}\right)^{\frac{1}{7}}  \right\} $ and $B_0 = \frac{\sigma^{\frac{8}{3}} T^{\frac{1}{3}}}{n^{\frac{2}{3}}}$, \ming{Can we choose $B_0$ arbitrary? } then it can be easily verified that we have
\begin{align*}
\frac{1}{T}\sum_{t=0}^T\E\|\nabla f(\x_t)\|^2\leq \mathcal{O}\left( \left( \frac{\sigma}{nT} \right)^{\frac{2}{3}} + \left( \frac{\epsilon^2}{nT^6} \right)^{\frac{1}{7}} + \frac{1}{T} \right),
\end{align*}
where we treat $f(\x_0) - f^*$, $L$ as constants.
\end{proof} 

\subsection{Proof of Corollary~\ref{coro:igt}}
\begin{proof}
Combining Theorem~\ref{theo:general_nonconvex} and ~\ref{theo:igt} together,  we shall get
\begin{align*}
\frac{1}{T}\sum_{t=0}^T\E\|\nabla f(\x_t)\|^2
\leq & \frac{16(f(\x_0) - f^*)}{\gamma T}+ \frac{8\alpha\sigma^2 }{n}  + \frac{8\sigma^2}{\alpha nB_{0}T} +\frac{8\rho^2\gamma^4\Delta^4}{\alpha^4}
+ \frac{8(64 L^2_{\A} + 2L^2)L^2\gamma^6\epsilon^2}{n}.
\end{align*}
Since $L_{\A} = 2L$, after setting $\gamma =\min\left\{\frac{1}{2L},\left(\frac{n^4}{\sigma^8 T^5}\right)^{\frac{1}{9}},\left(\frac{n}{\epsilon^2 T}\right)^{\frac{1}{7}}  \right\} $, $\alpha = \left(\frac{n^5}{\sigma^8 T^4}\right)^{\frac{1}{9}}$ and $B_0 = 1$, then it can be easily verified that we have
\begin{align*}
\frac{1}{T}\sum_{t=0}^T\E\|\nabla f(\x_t)\|^2\leq \mathcal{O}\left( \left( \frac{\sigma^8}{n^4T^4} \right)^{\frac{1}{9}} + \left( \frac{\epsilon^2}{nT^6} \right)^{\frac{1}{7}} + \frac{1}{T} \right),
\end{align*}
where we treat $f(\x_0) - f^*$, $L$ and $\rho$ as constants.
\end{proof}

% {\centering\huge\textbf{Supplementary Materials}}

\end{document}